\newtheorem{theorem}{Theorem}[section]
\newtheorem{lemma}[theorem]{Lemma}
\newtheorem{corollary}[theorem]{Corollary}
\newtheorem{remark}[theorem]{Remark}
\newtheorem{definition}[theorem]{Definition}
\setlist[enumerate]{noitemsep}
\setlist[itemize]{noitemsep}
\def\andrew#1{\marginpar{{\bf AG:} #1}}
\def\hl{\text{\sc HL}}
\def\hhl{\text{\sc HHL}}
\def\opt{\text{\sc OPT}}
\newcommand{\pset}{{\cal P}}
\newcommand{\setsystem}{{\cal S}}
\newcommand{\rs}{$r$-significant}
\newcommand{\shs}{SPHS}
\newcommand{\Xcomment}[1]{}
\newcommand{\dist}{\ensuremath{\text{dist}}}
\newcommand\fixstatement[2]{\AtEndEnvironment{#1}{\xdef\pat@label{#2\space\@currentlabel}}}
\globtoksblk\prooftoks{1000}
\newcounter{proofcount}
\newenvironment{proofatend}{\begin{proof}}{\end{proof}}
  \edef\next{%
    \noexpand\begin{proof}[Proof of \pat@label]%
    \unexpanded\expandafter{\BODY}}%
\def\printproofs{%
  \count@=\z@
  \loop
    \the\toks\numexpr\prooftoks+\count@\relax
    \ifnum\count@<\value{proofcount}%
    \advance\count@\@ne
  \repeat}
\begin{document}

\title{On the Complexity of Hub Labeling}

\author{%
\makebox[.25\linewidth]{Maxim Babenko\thanks{Yandex and Higer School of Economics, Moscow, Russia. E-mail: \texttt{maxim.babenko@gmail.com}}}\and%
\makebox[.25\linewidth]{Andrew V. Goldberg\thanks{Amazon.com, Inc., Palo Alto, USA. Part of the work done while the author was at Microsoft Research. E-mail: \texttt{avg@alum.mit.edu}}}\and%
\makebox[.25\linewidth]{Haim Kaplan\thanks{School of Computer Science, Tel Aviv University, Israel. E-mail: \texttt{haimk@post.tau.ac.il}}}\and%
\makebox[.25\linewidth]{Ruslan Savchenko\thanks{Yandex, Moscow, Russia. Part of the work done while the author was at Department of Mech.\ and Math., Moscow State University. E-mail: \texttt{ruslan.savchenko@gmail.com}}}\and%
\makebox[.25\linewidth]{Mathias Weller\thanks{LIRMM, Universit\'e Montpellier II, France. E-mail: \texttt{mathias.weller@lirmm.fr}}}%
}

\date{}

\maketitle

\begin{abstract}
Hub Labeling (\hl) is a data structure for distance oracles.
Hierarchical \hl\ (\hhl) is a special type of \hl, that received a lot of attention from a practical point of view.
However, theoretical questions such as NP-hardness and approximation guarantee for \hhl\ algorithms have been left aside.
In this paper we study \hl\ and \hhl\ from the complexity theory point of view. We prove that both \hl\ and \hhl\ are NP-hard, and present upper and lower bounds for the approximation ratios of greedy \hhl\ algorithms used in practice.
We also introduce a new variant of the greedy \hhl\ algorithm and a proof that it produces small labels for graphs with small highway dimension.
\end{abstract}

\section{Introduction}
The point-to-point shortest path problem is a classical optimization 
problem with many applications.
The input to the problem is a graph
$G=(V,E)$, a length function $\ell: E \rightarrow \mathbb{R}$, and
a pair $s,t \in V$.
We define $n=|V|$ and $m=|E|$.
The goal is to find $\dist(s,t)$, the length of the shortest
$s$--$t$ path in $G$, where the length of a path is the sum of the lengths of its arcs.
We assume that the length function is non-negative and that there are no zero-length cycles.

The hub labeling algorithm (\hl)~\cite{CHKZ-03,GPPR-04}
is a shortest path algorithm that computes vertex labels during
preprocessing stage and answers $s,t$ queries using only the labels
of $s$ and $t$; the input graph is not used for queries~\cite{pel-00}.
For a directed graph a \emph{label} $L(v)$ for a vertex $v \in V$
consists of the \emph{forward label} $L_f(v)$ and the
\emph{backward label} $L_b(v)$. The forward label $L_f(v)$ consists of a sequence of pairs $(w,\dist(v,w))$,
where $\dist(v,w)$ is the distance (in $G$)
from $v$ to $w$.
The backward label $L_b$ is similar, with pairs $(u,\dist(u,v))$.
Vertices $w$ and $u$ (for forward and backward labels, respectively)
are called the \emph{hubs} of $v$.
For an undirected graph $L_f = L_b$, and we denote the labeling by $L$, so $L(v)$ itself is a set of pairs $(w,\dist(v,w))$.

The labels must obey the \emph{cover property}: for any two
vertices $s$ and $t$, the set $L_f(s) \cap L_b(t)$ must contain at least
one hub $v$ that is on a shortest $s$--$t$ path (we say that $v$ \emph{covers} the $[s,t]$ pair).
Given the labels, \hl\ queries are straightforward: to find $\dist(s,t)$,
simply find the hub $v \in L_f(s) \cap L_b(t)$ that minimizes
$\dist(s,v) + \dist(v,t)$.

Query time and space complexity depends on the label size.
The size of a label $|L(v)|$ is the number of hubs it contains.
For a directed graph the size of a forward (backward) label, $|L_f(v)|$ ($|L_b(v)|$), is the
number of hubs it contains and
the size of the full label of $v$, $L(v) = (L_f(v), L_b(v))$, is $|L(v)| = |L_f(v)|+|L_b(v)|$.
Unless mentioned otherwise, preprocessing algorithms attempt to minimize
the \emph{total labeling size} $|L| = \sum_V |L(v)|$.

Cohen et al.~\cite{CHKZ-03} give an $O(\log n)$ approximation algorithm
for \hl\ preprocessing.
This algorithm was generalized in~\cite{BGGN-13-2} and
sped up in~\cite{DGSW-14}.
These approximation algorithms compute small labels but,
although polynomial, do not scale to large problems~\cite{DGSW-14}.

A special case of \hl\ is \emph{hierarchical hub labeling}
(\hhl)~\cite{ADGW-12},
where vertices are globally ranked by ``importance'' and the label
for a vertex $v$ can only have more important hubs than $v$ and $v$ itself.
\hhl\ implementations are faster in practice than general \hl\ ones.
For several important graph classes, such as road and complex networks,
\hhl\ implementations find small labelings and scale to
large problems~\cite{ADGW-11,ADGW-12,AIY-13,DGPW-14}.
However, for the algorithms used in practice such as
hierarchical greedy (g-HHL) and hierarchical weighted greedy (w-HHL)
there was no theoretical guarantee on the approximation ratio.

Most of the work on the computational complexity of \hl\ (and \hhl)
algorithms is experimental.
The exceptions are approximation algorithms for \hl\ mentioned above,
and upper bounds for
\hl\ in case of low highway dimension~\cite{AFGW-10,ADFGW-11,ADFGW-13}.
However, there was no NP-completeness proof of for \hl.
NP-completeness was implicitly conjectured in~\cite{CHKZ-03}: this assumption motivates the $O(\log n)$-approximation algorithm. In addition, in~\cite{CHKZ-03} the authors prove that a more general problem, in which the paths to cover are part of the input, is NP-complete (which does not imply NP-hardness of the original problem). 

In this paper we obtain the following results on \hl\ and \hhl\ complexity:
\begin{itemize}
\item
We show that both the optimal \hl\ and the optimal \hhl\ problems
are NP-complete.
\item
We show that in a network of highway dimension $h$ and diameter $D$,
there is an \hhl\ such that every label size is $O(h \log D)$,
matching the \hl\ bound of~\cite{ADFGW-11,ADGW-11,ADFGW-13}.
\item
We propose a variant of the greedy algorithm (called d-HHL), for which we prove
    \begin{itemize}
    \item
    an $O(h \log n \log D)$ bound for every label size,
    \item
    an $O(\sqrt n \log n \log D)$-approximation ratio compared to the
optimal \hl\ (and therefore the optimal \hhl),
    \item
    an $\Omega(\sqrt n)$ lower bound on the approximation ratio for
the optimal \hhl.
    \end{itemize}
\item
For g-HHL, we prove
    \begin{itemize}
    \item
    an $O(\sqrt n \log n)$-approximation ratio compared to the
    optimal \hl.
    \item
    an $\Omega(\sqrt n)$ lower bound on the approximation ratio for
    the optimal \hhl.
    \end{itemize}
\item
For w-HHL, we prove
    \begin{itemize}
    \item
    an $O(\sqrt n \log n)$-approximation ratio compared to the
    optimal \hl.
    \item
    $\Omega(\sqrt[3] n)$ lower bound on the approximation ratio for
    the optimal \hhl.
    \end{itemize}
\item
We give an example showing that hierarchical labelings can be
$\Omega(\sqrt n)$ bigger than general labelings, improving and
simplifying~\cite{GRS-13}.
\end{itemize}

Our lower bounds on the greedy algorithms show that they do not give a poly-log
approximation, leaving the question of the possibility of poly-log approximation
open.
This is an interesting theoretical problem that may have a practical impact
as well.

\section{Preliminaries}


\subsection{HL Approximation Algorithm}
\label{sec:approx}

Cohen et al.\ obtain their $O(\log n)$ approximation algorithm for \hl\ by formulating it as a weighted set cover problem and applying the well known greedy approximation algorithm for set cover.
In the weighted set cover problem there is a \emph{universe} set $U$,
a family $\mathcal{F}$ of some subsets of $U$, a cost function $c: \mathcal{F} \rightarrow \mathbb{R_+}$,
and the goal is to find a collection $\mathcal{C}\subseteq \mathcal{F}$
such that $\cup_{S\in \mathcal{C}} S = U$ and $\sum_{S\in \mathcal{C}} c(S)$ is minimized.
The greedy set cover algorithm starts with an empty $\mathcal{C}$, then iteratively picks a set $S$ which maximizes
the ratio of the number of newly covered elements in $U$ to the cost of $S$ and adds $S$ to $\mathcal{C}$.

The elements to cover in the equivalent set cover instance are vertex pairs $[u,v]$.
For a directed graph pairs in $U$ are ordered and for an undirected graph pairs are unordered.
We first discuss directed graphs, then undirected ones.
Every possible set $P$ of vertex pairs  such that there exists a vertex $u$ which hits a shortest path between every pair in $P$ is a set. (There are exponentially many sets, but they are not used explicitly.)
The cost of a set $P$ is the number of vertices that appear in the first component of a pair in $P$ plus
the number of vertices that appear in the second component of a pair in $P$.

The greedy approximation algorithm for set cover as applied to this set cover instance is as follows.
The algorithm  maintains the set $U$
of \emph{uncovered} vertex pairs: $[u,w] \in U$ if $L_f(u) \cap L_b(w)$
does not contain a vertex on a shortest $u$--$w$ path.
Initially $U$ contains all vertex pairs $[u,w]$ such that
$w$ is reachable from $u$. The algorithm terminates
when $U$ becomes empty. Starting with an empty labeling,
in each iteration, the algorithm adds a vertex $v$ to forward labels
of vertices in a set $ S' \subseteq V$ and to backward labels of the vertices in $ S'' \subseteq V$
such that the ratio of the number of newly-covered pairs over the total
increase in the size of the labeling is (approximately) maximized.
Formally, let $U(v, S',S'')$ be the set of pairs in $U$ which are covered
if $v$ is added to $L_f(u): u \in S'$ and $L_b(w): w \in S''$.
The algorithm maximizes $|U(v, S',S'')|/(|S'| + |S''|)$ over all $v \in V$
and $S', S'' \subseteq V$.

To find the triples $(v, S',S'')$ efficiently the algorithm uses center graphs defined as follows.
A \emph{center graph} of $v$, $G_v = (X, Y, A_v)$, is a bipartite graph
with $X=V$, $Y=V$, and an arc $(u,w) \in A_v$ if $[u,w] \in U$ and some shortest path from
$u$ to $w$ goes through $v$.
The algorithm finds $(v, S',S'')$  that maximizes $|U(v, S',S'')|/(|S'| + |S''|)$ by computing a densest subgraph among all the subgraphs of the center graphs $G_v$.
The \emph{density} of a graph $G=(V,A)$ is $|A|/|V|$.
The \emph{maximum density subgraph (MDS)} problem
is the problem of finding an (induced) subgraph of a given graph $G$ of maximum density. This problem
can be solved in
polynomial time using parametric flows (e.g.,~\cite{GGT-89}).
For a vertex $v$, the
arcs of a subgraph of $G_v$ induced by $S' \subseteq X$ and $S'' \subseteq Y$
correspond to the pairs of vertices in $U$ that become covered
if $v$ is added to $L_f(u): u \in S'$ and $L_b(w): w \in S''$.
Therefore, the MDS of $G_v$ maximizes $|U(v, S',S'')|/(|S'| + |S''|)$
over all $S'$, $S''$.

For undirected graphs we have $L_f(v)=L_b(v)=L(v)$ by definition.
Pairs $[u,v]\in U$ are unordered and the cost of a set $P$ of unordered vertex pairs is the number of vertices that appear in a pair in $P$.
Let $U(v,S)$ be the set of unordered vertex pairs that become covered if we add $v$ to $L(u): u\in S$.
We want to maximize $U(v,S)/|S|$.
To find such a tuple, we use another type of a
center graph of $v$, $G_v=(V,E_v)$.
$G_v$ is an undirected graph with vertex set $V$ and 
with an edge $\{u,w\}\in E_v$ if $[u,w]\in U$ and some shortest path between $u$ and $w$ goes trough $v$. 
(For a pair $[v,v]$ there is a self-loop $\{v,v\}$ in $E_v$.)
Note that $G_v$ is not necessarily bipartite.
As in the directed case, MDS of $G_v$ maximizes $U(v,S)/|S|$ over all $S$.

The following is a folklore lemma about the greedy set cover algorithm.
\begin{lemma}\label{lm:appr}
If we run the greedy set cover algorithm where in each iteration we pick a set whose coverage to cost ratio is at least $1/f(n)$ fraction of the maximum coverage to cost ratio, then we get a cover of
cost within an $O(f(n) \log n)$ factor of optimal.
\end{lemma}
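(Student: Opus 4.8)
The plan is to adapt the textbook charging analysis of greedy set cover to the setting where each greedy step is only approximately optimal. Write $\opt$ for the cost of an optimal cover and let $N=|U|$ be the size of the universe; in our application the elements are (ordered or unordered) vertex pairs, so $N\le n^2$. The central device is a \emph{price} $p(e)$ assigned to each element $e\in U$: when a greedy step selects a set $S$ that newly covers $t>0$ elements at cost $c(S)$, assign every one of those $t$ elements the price $c(S)/t$. By construction the total cost of the returned cover equals $\sum_{e\in U}p(e)$, so it suffices to bound this sum.

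First I would bound, at the start of each iteration, the \emph{maximum} attainable coverage-to-cost ratio. If $U'\subseteq U$ is the current set of uncovered elements, then the sets of an optimal solution together cover all of $U'$ at total cost at most $\opt$; by an averaging argument some optimal set $S^\star$ satisfies $|S^\star\cap U'|/c(S^\star)\ge |U'|/\opt$. Hence the maximum coverage-to-cost ratio over all candidate sets is at least $|U'|/\opt$. Because the algorithm is guaranteed to pick a set whose ratio is at least a $1/f(n)$ fraction of this maximum, the set $S$ it actually selects satisfies $t/c(S)\ge |U'|/(f(n)\,\opt)$, i.e. the price charged to each newly covered element is at most $f(n)\,\opt/|U'|$.

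Next I would convert this per-step bound into a per-element bound. Order the elements $e_1,\dots,e_N$ by the iteration in which they become covered (breaking ties within a single step arbitrarily). When $e_j$ is covered, every element $e_j,e_{j+1},\dots,e_N$ is still uncovered at the beginning of that step, so $|U'|\ge N-j+1$ at that moment, giving $p(e_j)\le f(n)\,\opt/(N-j+1)$. Summing,
\[
\sum_{e\in U}p(e)\;\le\;f(n)\,\opt\sum_{j=1}^{N}\frac{1}{N-j+1}\;=\;f(n)\,\opt\cdot H_N,
\]
where $H_N=\sum_{k=1}^N 1/k=O(\log N)$.

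Finally, since the universe is a set of vertex pairs we have $N\le n^2$ and therefore $H_N=O(\log n)$, yielding total cost $O(f(n)\log n)\cdot\opt$, as claimed. The proof is routine, so the only points needing care are the averaging step that lower-bounds the maximum ratio by $|U'|/\opt$ — which relies on every feasible ``set'' (a family of pairs hit by a common vertex) being an available candidate at every iteration, so that in particular the optimal sets are eligible — and the substitution $N\le n^2$ rather than $N=n$, which is precisely what makes the stated $O(f(n)\log n)$ bound correct via $\log n^2=2\log n$. The approximation slack $f(n)$ simply multiplies through the price bound, so it factors cleanly out of the harmonic sum.
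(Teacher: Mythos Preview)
The paper does not actually prove this lemma; it is stated as folklore and left without proof. Your argument is the standard charging/harmonic-sum analysis of approximate greedy set cover, carried out correctly---including the care you take to justify the averaging step (optimal sets remain eligible candidates at every iteration) and to reconcile the $O(\log n)$ bound with the universe having size $N\le n^2$ rather than $n$. There is nothing to compare against in the paper, and nothing to fix in your proof.
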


Cohen et al.~\cite{CHKZ-03}
used this lemma and instead of finding the MDS exactly they used a linear-time 2-approximation algorithm~\cite{KP-94}. 
The result is an $O(\log n)$-approximation algorithm running in $O(n^5)$ time.
Delling et al.~\cite{DGSW-14} improve the running time to $O(n^3\log n)$.

\subsection{Canonical HHL}

Vertices are \emph{ordered} if there is a bijection $\pi: V\to \{1,\ldots, |V|\}$.
We say that $u$ is \emph{more important} than $v$ if $\pi(u) < \pi(v)$.
The labeling $L$ is \emph{hierarchical} if there is an order $\pi$ such that $u\in L_f(v) \cup L_b(v)$ implies $\pi(u) \le \pi(v)$.
In this case we say that $L$ \emph{respects} $\pi$.

Let $P_{u,v}$ denote the set of all vertices on shortest paths from $u$ to $v$.
For an order $\pi$ we define a \emph{canonical} \hhl\ in the following way: $u\in L_f(v)$ (resp. $u\in L_b(v)$) if and only if $u$ is the most important
vertex in $P_{v,u}$ (resp.\ $P_{u,v}$).
The following theorem is implicit in \cite{ADGW-12,DGPW-14,GRS-13}.
\begin{theorem}
For an order $\pi$ the canonical \hhl\ is the minimum \hhl\ that respects $\pi$.
\end{theorem}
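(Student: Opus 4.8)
The plan is to prove two directions: that the canonical labeling is itself a valid \hhl\ respecting $\pi$, and that it is contained component-wise in every \hhl\ respecting $\pi$, the latter immediately yielding minimality of its total size.

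First I would check that the canonical labeling respects $\pi$. If $u\in L_f(v)$, then by definition $u$ is the most important vertex of $P_{v,u}$; since $v\in P_{v,u}$ this gives $\pi(u)\le\pi(v)$, and the argument for $L_b$ is symmetric. Next I would verify the cover property. Fix a pair $[s,t]$ with $t$ reachable from $s$ and let $v$ be the most important vertex of $P_{s,t}$. The key structural fact is optimal substructure: since $v$ lies on some shortest $s$--$t$ path, concatenating any shortest $s$--$v$ path with the $v$--$t$ suffix of that path again yields a shortest $s$--$t$ path, so $P_{s,v}\subseteq P_{s,t}$, and symmetrically $P_{v,t}\subseteq P_{s,t}$. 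Because $v$ is most important in the larger set $P_{s,t}$ and lies in both subsets, it is also most important in $P_{s,v}$ and in $P_{v,t}$; hence $v\in L_f(s)$ and $v\in L_b(t)$, so $v$ covers $[s,t]$.

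The main work is the containment step. Let $L'$ be any \hhl\ respecting $\pi$, and suppose $u\in L_f(v)$ in the canonical labeling, i.e.\ $u$ is the most important vertex of $P_{v,u}$. Applying the cover property of $L'$ to the pair $[v,u]$ produces a common hub $w\in L'_f(v)\cap L'_b(u)$ lying on a shortest $v$--$u$ path, so $w\in P_{v,u}$. On one hand, $w\in P_{v,u}$ forces $\pi(u)\le\pi(w)$ because $u$ is most important there; on the other hand, $w\in L'_b(u)$ together with the hierarchy condition forces $\pi(w)\le\pi(u)$. Since $\pi$ is a bijection, these inequalities give $w=u$, whence $u\in L'_f(v)$. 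The symmetric argument handles backward hubs, so every canonical hub appears in $L'$ and thus $|L|\le|L'|$.

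The subtle point, and the one I expect to demand the most care, is this containment argument, specifically the observation that for the pair $[v,u]$ the hierarchy constraint pins the unique admissible covering hub to $u$ itself; making this clean relies on $\pi$ being a strict total order, so that $\pi(w)=\pi(u)$ forces $w=u$. The undirected case requires no separate treatment: the identical proof applies with $L_f=L_b=L$ and $P_{u,v}=P_{v,u}$.
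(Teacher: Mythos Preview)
Your proof is correct and follows essentially the same approach as the paper: establish the cover property via optimal substructure (any shortest $s$--$v$ path extends to a shortest $s$--$t$ path when $v\in P_{s,t}$), then show containment by applying the cover property of an arbitrary $\pi$-respecting labeling $L'$ to the pair $[v,u]$ and using the hierarchy constraint on $L'_b(u)$ to force the common hub to be $u$ itself. Your write-up is slightly more explicit than the paper's---you also verify that the canonical labeling respects $\pi$ and spell out the two opposing inequalities on $\pi(w)$---but the substance is identical.
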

\begin{proof}
We first show that the canonical \hhl\ $L$ obeys the cover property.
For a pair $[v,w]$ let $u$ be the most important vertex in $P_{v,w}$.
Consider any $v$-$u$ shortest path. It is easy to see that it is a subpath of some $v$-$w$ shortest path.
Therefore by the definition of canonical labeling we have $u\in L_f(v)$ and $u\in L_b(w)$.

Now we show that $L$ is a sublabeling of any \hhl\ $\tilde L$ that respects $\pi$.
Let $u\in L_f(v)$ (resp. $u\in L_b(v)$).
Then $u$ is more important than any other vertex $w$ on a $v$-$u$ (resp. $u$-$v$) shortest path.
Therefore $\tilde L_b(u)$ (resp.\ $\tilde L_f(u)$) doesn't have any such $w$ except $u$.
Since $\tilde L$ covers the $[v,u]$ (resp. $[u,v]$) pair we have $u\in \tilde L_f(v)$ (resp. $u\in \tilde L_b(v)$).
So $L$ is a sublabeling of $\tilde L$.
\end{proof}

\subsection{Greedy HHL Algorithms}\label{sec:greed}
In this section we describe greedy \hhl\ algorithms in terms of center graphs.
For an alternative description and efficient implementation of these algorithms,
see~\cite{ADGW-12,DGPW-14}.

A greedy \hhl\ algorithm maintains the center graphs $G_v = (X, Y, A_v)$ 
defined in Section~\ref{sec:approx}.
At each iteration, the algorithm selects a center graph of a vertex $v$
and adds $v$ to $L_f(u)$ for all non-isolated vertices $u\in X$  
and to $L_b(w)$ for all non-isolated vertices $w\in Y$.
Note that after the labels are augmented this way, 
all vertex pairs $[u,w]$ for which
there is a $u$-$w$ shortest path passing through $v$ are covered.
Therefore, the center graph of every vertex is chosen once, and the labeling
is hierarchical.

Greedy algorithms differ by the criteria used to select the next
center graph to process.
The \emph{greedy \hhl\ (g-HHL)} algorithm selects the center graph with
most edges.
The \emph{weighted greedy \hhl\ (w-HHL)} algorithm selects a center graph
with the highest density (the number of edges divided by the number of
non-isolated vertices).

We propose a new \emph{distance greedy \hhl\ (d-HHL)} algorithm.
To every vertex pair $[u,v]$ we assign a weight
$$
W(u,v) =
\begin{cases}
0, & \text{ if }\dist(u,v) = 0\\
n^{2\lfloor \log_2 (\dist(u,v))\rfloor}, & \text { otherwise}
\end{cases}
$$
and use $W$ to weight the corresponding edges in center graphs.
At each iteration, d-HHL selects a center graph with the largest sum of edge weights.

We define the \emph{level} of $[u,v]$ as
$\lfloor \log_2 (\dist(u,v)) \rfloor$ (if $\dist(u,v)=0$ the level of $[u,v]$ is~$-\infty$).
The definition of $W$ insures that if $[u,v]$ is the maximum level uncovered vertex pair,
$W(u,v)$ is greater than the total weight of all lower-level uncovered pairs.
Therefore d-HHL primarily maximizes the number of uncovered maximum level
pairs that become covered, and other pairs that become covered are used
essentially as tie-breakers.

We say that a vertex $w$ has {\em level} $i$ if at the iteration when $w$ is selected
by d-HHL, the maximum level of an uncovered vertex pair is $i$.
As the algorithm proceeds, the levels of vertices it selects are monotony
decreasing.

\subsection{Highway Dimension}

In this section we review the definition of highway dimension (HD) and related concepts.
As HD is defined for undirected graphs, when we talk about HD we assume that
all graphs are undirected and connected.

\begin{definition} \label{def:rw}
Given a shortest
path $P= (v_1, \ldots, v_k)$ and $r > 0$, a shortest path $P'$ is an $r$-witness for $P$ if and only if $\ell(P') > r$ and one of the following conditions holds:
\begin{enumerate}
\item
$P' = P$; or
\item
$P' = (v_0, v_1, \ldots, v_k)$; or
\item
$P' = (v_1, \ldots, v_k, v_{k+1})$; or
\item
$P' = (v_0, v_1, \ldots, v_k, v_{k+1})$.
\end {enumerate}
\end{definition}

\begin{definition} \label{def:rs} A shortest path $P$ is \rs\ if it has an $r$-witness path.
\end{definition}
\Xcomment{In other words, $P$ is \rs\ if it can be extended by at most one vertex at each end to
a shortest path of length greater than $r$. Note that, if $(v,w) \in E$ and $\ell(v,w) > r$, then the trivial paths $(v)$ and $(w)$ are \rs, since $(v,w)$ is the shortest path from $v$ to $w$.
}
Let $\pset_r$ denote the set of all \rs\ paths.
Given a vertex $v$ and a path $P$, we define the distance from $v$ to $P$
by $\dist(v,P)=\min_{w \in P} \dist(v,w)$.
\begin{definition}\label{def:close}
A shortest path $P$ is $(r,d)$-close to a vertex $v$ if $P$ is \rs\ with an $r$-witness path $P'$ such that $\dist(v,P') \leq d$.
\end{definition}
Note that if $P$ is $(r,d)$-close to $v$,
then $P$ is also $(r',d')$-close to $v$ for any $0<r'\leq r$ and $0\leq d \leq d'$.

Let the {\em $r$-neighborhood} of $v$,
denoted by $S_r(v)$, be the set of all $P \in \pset_r$ that are $(r,2r)$-close to $v$.
Given a set of paths $\cal{P}$,
we say that $H\subseteq V$ is a \emph{hitting set} for
$\cal{P}$ if every path in $\cal{P}$ contains a vertex in $H$.

\begin{definition}
\label{def:hd}
A network $(G,\ell)$ has {\em highway dimension (HD)} $h$ if $h$ is the
smallest integer such that
for any $r>0$ and
any $v \in V$, there exists a hitting set $H$ for $S_r(v)$ (that depends on $v$ and $r$) with $|H| \le h$.
\end{definition}

Given $r \ge 0$ and $v \in V$, we define the \emph{ball} of radius $r$
centered at $v$, $B_{r}(v)$, to be the set of all vertices within distance at most
$r$ from $v$.

A notion related to highway dimension is that of a {\em sparse shortest-path hitting set (\shs)}.
\begin{definition}
\label{def:spc}
For $r > 0$, an $(h,r)$-\shs\ is a hitting set $C \subseteq V$ for $\pset_r$
such that $\forall v \in V,\; |B_{2r}(v) \cap C| \le h$.
\end{definition}

Abraham et al.~\cite{AFGW-10,ADFGW-11,ADFGW-13} show:
\begin{theorem} \label{thm:sps}
If the highway dimension of  a network $(G, \ell)$ is $h$, then (1) for any $r > 0$, a minimum hitting
set for $P_r$ is an $(h, r)$-\shs\ and (2) If shortest paths are unique one can find an $(h\log h, r)$-\shs\ in polynomial time.
\end{theorem}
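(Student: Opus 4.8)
The plan is to treat the two claims separately, and for (1) I would avoid any injective charging and instead use a single global exchange argument. Let $C$ be a minimum hitting set for $\pset_r$. Since $C$ hits every \rs\ path by definition, the covering requirement of an $(h,r)$-\shs\ (Definition~\ref{def:spc}) holds for free, so only the local bound $|B_{2r}(v)\cap C|\le h$ needs proof. Fix $v$. The engine of the argument is the following covering lemma: if $P\in\pset_r$ has \emph{all} of its $C$-vertices inside the ball, i.e. $P\cap C\subseteq B_{2r}(v)$, then $P\in S_r(v)$. Indeed, pick any $u\in P\cap C$; as $P$ is \rs\ it has an $r$-witness $P'$, and since $P'$ contains every vertex of $P$ we get $u\in P'$, whence $\dist(v,P')\le\dist(v,u)\le 2r$. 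Thus $P$ is $(r,2r)$-close to $v$, i.e. $P\in S_r(v)$.

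Now I would invoke the definition of highway dimension (Definition~\ref{def:hd}) to obtain a hitting set $H$ for $S_r(v)$ with $|H|\le h$, and set $C'=(C\setminus B_{2r}(v))\cup H$. Every \rs\ path that has a $C$-vertex outside $B_{2r}(v)$ retains that vertex in $C'$; every \rs\ path whose $C$-vertices all lie in the ball belongs to $S_r(v)$ by the lemma and is therefore hit by $H\subseteq C'$. Hence $C'$ still hits $\pset_r$, while $|C'|\le|C|-|B_{2r}(v)\cap C|+h$. Minimality of $C$ forces $|B_{2r}(v)\cap C|\le h$, which proves (1). No geometric injection is needed; the whole weight of the argument rests on the covering lemma together with the optimality of $C$.

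For (2) I would phrase the task as approximating a hitting set for $\pset_r$ subject to a \emph{local} sparsity goal. Uniqueness of shortest paths is used first for polynomiality: there are only $O(n^2)$ shortest paths in total, hence at most $O(n^2)$ \rs\ paths, so the hitting-set instance and its LP relaxation have polynomial size. Part (1) already exhibits a feasible integral solution whose load on every ball $B_{2r}(v)$ is at most $h$, so the LP that covers every \rs\ path while minimizing the maximum ball load $\max_v\sum_{w\in B_{2r}(v)}y_w$ has optimum at most $h$ and is solvable in polynomial time. This yields a fractional, locally $h$-sparse hitting "set" $y$, and the final step is to round $y$ to an integral hitting set inflating each ball load by only a factor $O(\log h)$, giving local sparsity $O(h\log h)$.

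The main obstacle is precisely this rounding, and specifically obtaining $\log h$ rather than $\log n$. Plain randomized rounding scaled by $\alpha$ leaves a path unhit with probability $\le e^{-\alpha}$, so a global union bound over all $O(n^2)$ \rs\ paths would demand $\alpha=\Omega(\log n)$ and only give $O(h\log n)$. To reach $\log h$ I would exploit the second benefit of unique shortest paths, namely that the range space of \rs\ paths meeting a fixed ball has bounded VC dimension, and apply the $\epsilon$-net / reweighting machinery of Br\"onnimann--Goldberg, whose guarantee is $O(\opt\cdot\log\opt)$ for bounded-VC-dimension range spaces; combined with the per-ball bound $\opt\le h$ from part (1) this localizes the net size to $\mathrm{poly}(h)$ effective ranges per ball and delivers $O(h\log h)$. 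Establishing the VC-dimension bound and, more delicately, arguing that the approximation guarantee holds \emph{simultaneously in every ball} rather than merely globally is where the real difficulty lies, and it is the reason the uniqueness hypothesis is needed only for part (2).
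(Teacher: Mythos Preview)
The paper does not supply its own proof of this theorem; it is quoted as a result of Abraham et al.\ and used as a black box. So there is nothing within the present paper to compare your argument against, and I evaluate your proposal on its own merits and against the cited sources.

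Your argument for (1) is correct and is precisely the standard exchange argument in the cited works: swap $C\cap B_{2r}(v)$ for an $h$-element hitting set $H$ of $S_r(v)$; your covering lemma guarantees that $C'=(C\setminus B_{2r}(v))\cup H$ still hits $\pset_r$; minimality of $C$ then forces $|C\cap B_{2r}(v)|\le h$.

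Part (2) is a different matter. What you have written is an outline, and you yourself flag its central step as unresolved. The gap is real. You propose an LP whose fractional optimum has ball-load at most $h$, and then wish to round it while inflating each ball's load by only $O(\log h)$ \emph{simultaneously for every $v$}. The Br\"onnimann--Goodrich machinery (not Br\"onnimann--Goldberg) you invoke produces a \emph{global} hitting set of size $O(\tau\log\tau)$ for range spaces of bounded VC dimension, where $\tau$ is the optimum for the whole instance; it says nothing about how that set is distributed across balls, and a global bound on $|C|$ does not imply $|C\cap B_{2r}(v)|=O(h\log h)$, since the minimum hitting set for all of $\pset_r$ can be far larger than $h$. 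Conversely, applying the $\epsilon$-net argument separately in each ball---where the local optimum is indeed at most $h$---gives an $O(h\log h)$-size set \emph{per ball} but with no control on the union over all balls. You have not supplied the mechanism that reconciles the global hitting requirement with the uniform per-ball sparsity, and that mechanism is the substance of the theorem. In short, (1) is complete and correct; (2) names the right ingredients---polynomially many paths under uniqueness, bounded VC dimension, $\epsilon$-net rounding---but stops well short of a proof.
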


\section{HHL and Highway Dimension}
\label{hhl-hd}
Abraham et al.~\cite{AFGW-10,ADFGW-11,ADFGW-13} show that a network with
HD $h$ and diameter $D$ has an \hl\ with $|L(v)| = O(h \log D)$, and that
in polynomial time one can find an \hl\ with $|L(v)| = O(h \log h \log D)$.
We show similar results for \hhl.

Assume that  edge lengths are at least $1$ and let $D$ be the diameter of the network $(G,\ell)$.
A multiscale \shs\ of $(G,\ell)$ is a collection of sets $C_i$ for $0 \le i\le \lceil \log D\rceil$, where each $C_i$ is a
$(h, 2^{i-1})$-\shs. In particular, note that $C_0 = V$, since every vertex is an $(1/2)$-significant path.
For $0 \le i \le \lceil \log D \rceil$,
let $Q_i = C_i \setminus \bigcup_{j=i+1}^{\lceil \log D\rceil} C_j$.

\begin{theorem}
A network with HD $h$ and diameter $D$ has an \hhl\ with
$|L(v)| = O(h \log D)$ for all $v \in V$,
and if shortest paths are unique  one can find in polynomial time an \hhl\ with $|L(v)| = O(h \log h \log D)$.
\end{theorem}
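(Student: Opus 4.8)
The plan is to build the hierarchical labeling directly from the multiscale \shs\ machinery set up just before the theorem. First I would fix the order $\pi$ on the vertices: order the classes $Q_i$ from the top scale $i=\lceil \log D\rceil$ down to $i=0$, so that a vertex in a higher-scale \shs\ is more important than one appearing only at a lower scale, breaking ties inside each $Q_i$ arbitrarily. Because $C_0 = V$, every vertex lands in exactly one $Q_i$, so $\pi$ is a genuine bijection. Given this order, the canonical \hhl\ that respects $\pi$ is, by the theorem in the Canonical HHL subsection, the \emph{minimum} \hhl\ respecting $\pi$; so it suffices to bound the size of the canonical labeling, and the whole problem reduces to a counting argument on the hubs forced by $\pi$.

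Next I would bound $|L_f(v)|$ (the backward case is symmetric in an undirected graph). A hub $u$ lies in $L_f(v)$ exactly when $u$ is the most important vertex on $P_{v,u}$. The key step is to charge each such hub to a scale $i$ with $2^{i-1} < \dist(v,u) \le 2^i$, and to argue that on that scale the relevant path is captured by the \shs\ $C_i$: since $u$ is the $\pi$-maximal vertex of a shortest $v$--$u$ path of length in this dyadic range, $u$ must belong to $C_i$ (otherwise a more important, higher-scale vertex would hit the path and preempt $u$). I would then invoke the sparsity property of an $(h,2^{i-1})$-\shs: all such hubs $u$ at scale $i$ lie within distance roughly $2^i \le 2\cdot 2^{i-1}$ of $v$, hence inside a ball $B_{2\cdot 2^{i-1}}(v)$, and Definition~\ref{def:spc} guarantees at most $h$ members of $C_i$ there. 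Summing $O(h)$ hubs over the $O(\log D)$ scales gives $|L_f(v)| = O(h\log D)$, as required.

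For the polynomial-time, unique-shortest-paths claim I would replace each exact $(h,2^{i-1})$-\shs\ by the $(h\log h, 2^{i-1})$-\shs\ produced by Theorem~\ref{thm:sps}(2), and rerun the identical counting argument. The sparsity constant is now $h\log h$ per scale, so the same dyadic charging yields $|L(v)| = O(h\log h\log D)$; the labeling is computed by forming the multiscale \shs, defining $\pi$, and emitting the canonical hubs, all in polynomial time.

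The main obstacle I anticipate is the charging step: making precise why the $\pi$-maximal vertex $u$ on a length-$2^i$-scale shortest path is actually a member of $C_i$ and why it is $(r,2r)$-close enough to fall under the ball-sparsity bound. This requires checking that the subpath $P_{v,u}$, or a short extension of it, is genuinely \rs\ (Definition~\ref{def:rs}) at the scale $2^{i-1}$ so that $C_i$ is obliged to hit it, and that the witness-path distance condition in Definition~\ref{def:close} places $u$ within $B_{2r}(v)$ rather than merely on the path. Handling the boundary scales $i=0$ and $i=\lceil\log D\rceil$ cleanly, and confirming $u$ lies inside the correct ball around $v$ rather than around some intermediate vertex, are the places where the argument must be most careful.
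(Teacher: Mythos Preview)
Your ordering $\pi$ and the overall architecture---use the multiscale \shs\ to define $\pi$, then bound the resulting \hhl---match the paper exactly. The divergence is in how the label size is bounded. The paper does not analyze the canonical labeling directly; instead it writes down an explicit \hhl,
\[
L(v) \;=\; \{v\} \cup \bigl\{\,w : \pi(w) < \pi(v),\ w \in C_j \cap B_{2^j}(v)\ \text{for some }j\,\bigr\},
\]
verifies the cover property for it by a short case analysis on the scale of an $s$--$t$ path, and reads off the size bound $1 + \sum_j |C_j \cap B_{2^j}(v)| \le 1 + (\lceil\log D\rceil{+}1)\,h$ straight from the sparsity clause of Definition~\ref{def:spc}. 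The canonical labeling, being the minimum \hhl\ respecting $\pi$, is then automatically at most this large.

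Your direct charging of canonical hubs to distance-scales has a real gap, precisely at the point you flagged as the main obstacle. From ``$u$ is $\pi$-maximal on $P_{v,u}$'' together with ``$C_i$ hits $P_{v,u}$'' you can only conclude that $u$ is at least as important as \emph{some} $w\in C_i$ lying on the path; this forces $u\in Q_{j'}$ with $j'\ge i$, but it does \emph{not} force $u\in C_i$, because the hitting sets $C_0,C_1,\ldots$ are not assumed to be nested. Hence the sparsity bound $|C_i\cap B_{2\cdot 2^{i-1}}(v)|\le h$ says nothing about $u$. The easy repair is to charge $u$ to its own layer $j'$ (the unique index with $u\in Q_{j'}$) rather than to the distance-based index $i$: then $u\in Q_{j'}\subseteq C_{j'}$, and since $j'\ge i$ you get $\dist(v,u)\le 2^i\le 2^{j'}$, so $u\in C_{j'}\cap B_{2^{j'}}(v)$; now sparsity gives at most $h$ hubs per layer and the sum over layers is $O(h\log D)$. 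With this correction your argument is essentially the paper's explicit labeling read in reverse. Your treatment of the polynomial-time statement (replace each $C_i$ by the $(h\log h,2^{i-1})$-\shs\ of Theorem~\ref{thm:sps}(2) and rerun the count) is the same as the paper's.
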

\begin{proofatend}
Consider the ordering $r$ such that for $i<j$ each  $w\in Q_i$ is less
important than each $v\in Q_j$ (i.e.\ $r(w)<r(v)$), and vertices within each $Q_i$ are ordered
arbitrarily.
For each $v \in Q_i$, define
$$
L(v) = \{v\} \cup \{ r(w) > r(v), \; w \in C_j \cap B_{2^j}(v)\} .
$$
Consider a shortest $s$--$t$ path $P$ and let $i$ be
such that $2^{i-1}< \ell(P) \le 2^i$.
Assume, w.l.g., that $r(s) < r(t)$.
Let $s \in Q_x$ and $t \in Q_y$; we have $x \le y$.

If $y \ge i$, then $t \in B_{2^i}(s)$ so $t\in Q_y \cap B_{2^y}(s) \subseteq  C_y \cap B_{2^y}(s)$ and therefore $t \in L(s)$.
If $y < i$, then since $x\le y < i$  there must be  a vertex $w \not= s,t$ such that $w \in P \cap C_i$.
By the definition of $i$, $w \in B_{2^i}(s)$ and $w \in B_{2^i}(t)$.
Therefore $w \in L(s) \cap L(t)$.
In both cases, the cover property holds.

Using the multiscale \shs\ provided by Theorem \ref{thm:sps} we get that there exists
an HHL such that
 $|L(v)| = O(h \log D)$ and if shortest paths are unique we can compute in polynomial time an HHL such that
 $|L(v)| = O(h \log h \log D)$.
\end{proofatend}

Next we discuss the \emph{distance greedy} d-HHL algorithm 
(defined in Section~\ref{sec:greed}).

\begin{theorem}\label{tm:dhhl}
In a network with HD $h$ and diameter $D$
d-HHL finds a labeling with $|L(v)| = O(h \log n \log D)$, for all $v \in V$.
\end{theorem}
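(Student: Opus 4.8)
The plan is to fix a vertex $v$ and bound $|L(v)|$ by charging each of its hubs to the \emph{level} at which d-HHL selects it, then showing that each level contributes only $O(h\log n)$ hubs. First I would exploit the monotonicity of levels noted in Section~\ref{sec:greed}: if a hub $u\in L(v)$ is selected while the maximum uncovered level is $i$, then the pair $[v,u]$ is still uncovered at that moment, so its level is at most $i$ and hence $\dist(v,u)<2^{i+1}$. Thus every hub that enters $L(v)$ during the level-$i$ phase lies in the ball $B_{2^{i+1}}(v)$. Since edge lengths are at least $1$ and the diameter is $D$, only $O(\log D)$ levels occur, so it suffices to prove that for each $v$ and each level $i$ at most $O(h\log n)$ hubs are added to $L(v)$.

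To extract the factor $h$ I would invoke the \shs\ machinery. Fix $v$ and $i$ and take the $(h,2^i)$-\shs\ $C_i$ guaranteed by Theorem~\ref{thm:sps}(1); it is a hitting set for every level-$i$ shortest path and satisfies $|B_{2^{i+1}}(v)\cap C_i|\le h$. Because the labeling d-HHL produces is exactly the canonical \hhl\ for the order in which it selects vertices, a level-$i$ pair $[v,v']$ incident to $v$ is covered precisely by the most important vertex on some shortest $v$--$v'$ path, and that path is hit by an \shs\ vertex $c\in C_i$ with $\dist(v,c)\le\dist(v,v')<2^{i+1}$, i.e.\ $c\in B_{2^{i+1}}(v)\cap C_i$. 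Hence the level-$i$ pairs incident to $v$ split into at most $h$ groups according to a hitting \shs\ vertex; equivalently, they admit a hitting set of size at most $h$. (Level-$i$ paths of length exactly $2^i$ are handled by the one-vertex extension clause in the definition of \rs, and pairs incident to $v$ of level strictly below $i$ that are covered incidentally during this phase only place hubs even deeper inside $B_{2^i}(v)$, so they can be absorbed into the same count.)

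The remaining factor $\log n$ should come from the greedy dynamics together with this size-$h$ local hitting set. The weights $W$ ensure that throughout the level-$i$ phase d-HHL always selects a center graph covering the largest possible number of uncovered level-$i$ pairs, with lower levels acting only as tie-breakers, and that it adds the selected vertex to the label of every vertex with an uncovered incident pair through it. I would then run a set-cover-style charging argument localized to $v$: as long as many of $v$'s level-$i$ pairs remain uncovered, one of the at most $h$ hitting \shs\ vertices—say $c$—has a $1/h$ fraction of them on shortest paths through it, so the center graph of $c$, which d-HHL could have chosen, covers at least that many uncovered pairs. Comparing this against d-HHL's actual, at-least-as-good choice should certify that a constant fraction of the heaviest group of $v$'s remaining level-$i$ pairs is eliminated each time $v$ acquires a new level-$i$ hub, giving $O(\log n)$ hubs per group and $O(h\log n)$ per level.

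The hard part will be precisely this last step, since d-HHL maximizes \emph{global} coverage rather than $v$-local coverage, and the paper's own $\Omega(\sqrt n)$ lower bounds show that such a greedy can otherwise be extremely wasteful. The delicate point is to argue that whenever many of $v$'s level-$i$ pairs are uncovered, the globally best hub still makes sufficient \emph{local} progress; the leverage is that every still-uncovered level-$i$ pair lying on a shortest path through the \shs\ vertex $c$ is an edge of the center graph of $c$, so a center graph that d-HHL prefers to $c$ cannot have all of its coverage disjoint from $v$'s uncovered neighborhood near $c$. Reconciling the global greedy choice with a per-vertex bound is the crux of the argument; once the per-level bound $O(h\log n)$ is established, summing over the $O(\log D)$ levels yields $|L(v)|=O(h\log n\log D)$ for every $v\in V$, as claimed.
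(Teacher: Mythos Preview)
Your high-level decomposition is right and matches the paper: bound hubs per level, observe that any level-$i$ hub of $v$ lies in $B_{2\cdot 2^i}(v)$, and sum over $O(\log D)$ levels. You also put your finger on the real difficulty---d-HHL optimizes \emph{global} level-$i$ coverage, not progress on $v$'s own pairs---but your proposed resolution does not actually close this gap. Your argument tracks only the level-$i$ pairs \emph{incident to $v$}, and then asserts that whenever d-HHL adds a hub $w$ to $L(v)$ it must make nontrivial progress on those pairs because ``a center graph that d-HHL prefers to $c$ cannot have all of its coverage disjoint from $v$'s uncovered neighborhood.'' That implication is false: d-HHL may select some $w\in B_{2\cdot 2^i}(v)$ that covers a single $v$-incident pair while beating the \shs\ vertex $c$ by covering many level-$i$ pairs far from $v$. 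With your potential (the count of $v$'s uncovered level-$i$ pairs), each hub addition could decrement it by only one, and you get no $O(h\log n)$ bound.

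The paper avoids this by changing both the potential and the hitting set. Instead of $v$-incident pairs, it tracks all uncovered level-$i$ pairs $[x,y]$ whose shortest path lies in the neighborhood $\setsystem_{2^i}(v)$ (``relevant'' pairs), and instead of an \shs\ it invokes the HD definition directly to get a size-$h$ hitting set $H$ for $\setsystem_{2^i}(v)$. The crucial observation you are missing is that \emph{every} level-$i$ pair covered by a vertex $w\in B_{2\cdot 2^i}(v)$ has its shortest path in $\setsystem_{2^i}(v)$, so for such $w$ the global level-$i$ count equals the relevant count. Now pigeonhole on $H$ gives some $u\in H$ covering a $1/h$ fraction of relevant pairs; since d-HHL chose $w$ over $u$, $w$ covers at least as many level-$i$ pairs as $u$, and all of them are relevant, so each selection from $B_{2\cdot 2^i}(v)$ drops the relevant count by a $(1-1/h)$ factor. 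Starting from at most $n^2$ relevant pairs, at most $O(h\log n)$ such selections occur, bounding the level-$i$ hubs of $v$. Your \shs-based, $v$-local potential cannot be made to work without this enlargement to the full neighborhood $\setsystem_{2^i}(v)$.
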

\begin{proofatend}
We show that for every vertex  $v$ and level $i$, $L(v)$ contains $O(h \log n)$
hubs at level $i$.

Consider the (consecutive) iterations of the algorithm that select
vertices at level $i$.
Consider $v \in V$ and $B_{2\cdot 2^i}(v)$.
Since d-HHL already covered all vertex pairs of level greater than $i$, $v$ can accumulate hubs of level $i$ only from
vertices in $B_{2\cdot 2^i}(v)$.

Suppose at some step the algorithm chooses a level $i$ vertex $w$ in $B_{2\cdot 2^i}(v)$.
Every $x$--$y$ shortest path of length $\ge 2^i$ hit by $w$ is in $\setsystem_{2^i}(v)$.
By the definition of highway dimension, there is a hitting set $H$
for $\setsystem_{2^i}(v)$ with $|H| \le h$.

We call a yet uncovered vertex pair $[x,y]$  {\em relevant} if there is a $x$--$y$ shortest path in $\setsystem_{2^i}(v)$ and
$\dist(x,y) \ge 2^i$.
Since $H$ is a hitting set for $\setsystem_{2^i}(v)$, $H$ is also a hitting set for the set of relevant vertex pairs (it hits a shortest path between each such pair).
It follows that there is a vertex $u\in H$ which covers at least $1/h$ relevant vertex pairs.
By the greedy choice of $w$, $w$ hits at least the same number of relevant pairs as $u$ does.

After $h$ consecutive vertices from $B_{2\cdot 2^i}(v)$ are selected, the
number of relevant vertex pairs is at most $(1-1/h)^h \le 1/e$ fraction of the original,
i.e., is reduced by a factor of $e$.
The initial number relevant vertex pairs is bounded by $n^2$,
therefore the algorithm chooses $O(h \log n)$ vertices in $B_{2\cdot 2^i}(v)$
before all relevant vertex pairs are hit.
Once all the relevant vertex pairs are hit, the algorithm will not choose
any level $i$ vertices in $B_{2\cdot 2^i}$.
\end{proofatend}

\section{Upper Bounds}
\label{sec:up}

In Sections~\ref{sec:up} and \ref{sec:lb}, we assume that isolated vertices are
deleted from the center graphs, so their density is the number of
edges divided by the number of (non-isolated) vertices.

\subsection{Greedy}\label{sec:ugreedy}

We show that g-HHL finds an \hhl\ of size that is within an $O(\sqrt n \log n)$
factor of the optimal \hl\ size.
We prove this by bounding  the ratio of the density of the center graph picked by g-HHL and the
density of the MDS of a center graph.

\begin{theorem}\label{thm:greedy}
g-HHL is an $O(\sqrt n \log n)$-approximation algorithm for \hl.
\end{theorem}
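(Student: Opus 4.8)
The plan is to realize g-HHL as an approximate greedy set cover and then invoke Lemma~\ref{lm:appr}. Recall from Section~\ref{sec:approx} that the optimal \hl\ is exactly the optimum of the weighted set cover instance whose elements are uncovered vertex pairs, and that at any point in the execution the maximum coverage-to-cost ratio achievable by a single set equals $\rho^*$, the density of a maximum density subgraph (MDS) taken over all current center graphs. When g-HHL picks the center graph $G_v$ with the most edges and adds $v$ to every non-isolated vertex, the resulting set covers $e_v$ uncovered pairs (one per edge of $G_v$) at cost $N_v$, the number of non-isolated vertices of $G_v$; hence its coverage-to-cost ratio is precisely the density $e_v/N_v$ of $G_v$. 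Since g-HHL picks a legitimate set at each step and terminates with a valid (hierarchical) cover, Lemma~\ref{lm:appr} will apply once we establish the key inequality
$$
\frac{e_v}{N_v} \ge \frac{\rho^*}{O(\sqrt n)},
$$
which identifies $f(n) = O(\sqrt n)$ and yields the $O(\sqrt n \log n)$ bound against the optimal \hl.

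To prove the key inequality I would fix a densest subgraph $H$, sitting inside some center graph $G_u$, that attains $\rho^*$, and write $p = |V_H|$ for its number of non-isolated vertices, so $e_H = \rho^* p$. The argument rests on two elementary estimates. First, $G_u$ contains $H$ and $G_v$ has the most edges of any center graph, so $e_v \ge e_u \ge e_H = \rho^* p$; combined with $N_v \le 2n$ this gives $e_v/N_v \ge \rho^* p / (2n)$. Second, a center graph with isolated vertices deleted has density at least $1/2$ (each vertex is incident to an edge, so $e_v \ge N_v/2$), while a subgraph on $p$ vertices has density at most $p$, i.e.\ $\rho^* \le p$.

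I would then split on $p$ at the threshold $\sqrt n$. If $p \ge \sqrt n$, the first estimate gives $e_v/N_v \ge \rho^* \sqrt n / (2n) = \rho^*/(2\sqrt n)$. If $p < \sqrt n$, then $\rho^* \le p < \sqrt n$, so $\rho^*/(2\sqrt n) < 1/2 \le e_v/N_v$ by the universal density floor. In both cases $e_v/N_v \ge \rho^*/(2\sqrt n)$, establishing the key inequality with $f(n) = 2\sqrt n = O(\sqrt n)$.

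The crux, and the only delicate point, is this balancing step: because g-HHL selects by edge count rather than density, the adversarial configuration is a very dense subgraph supported on few vertices, embedded in an otherwise sparse but large center graph that g-HHL would prefer. What defuses it is the density ceiling $\rho^* \le p$: a subgraph on few vertices simply cannot be dense, so precisely when the first estimate is weak (small $p$), the target ratio $\rho^*/\sqrt n$ has already fallen below the floor of $1/2$. The remaining loose ends are routine and do not change the rate: absorbing the directed-case factor of $2$ in $N_v \le 2n$, noting that self-loops and non-bipartiteness in the undirected center graphs only strengthen the $\rho^* \le p$ estimate, and checking that g-HHL processes each center graph once and terminates with a full cover so that Lemma~\ref{lm:appr} is applicable.
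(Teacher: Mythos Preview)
Your proof is correct and follows essentially the same approach as the paper: both show that the density of the chosen center graph is within $O(\sqrt n)$ of the MDS density by combining the edge-maximality of g-HHL's choice with the density floor $\ge 1/2$ from the absence of isolated vertices, and then invoke Lemma~\ref{lm:appr}. The only cosmetic difference is that the paper collapses your case split into a single inequality chain (observing directly that $\rho^* \le \sqrt{m'}$ since any subgraph with density $\rho^*$ has at least $(\rho^*)^2$ edges, and all center graphs have at most $m'$ edges), but the underlying ingredients are identical.
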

\begin{proof}
Suppose that at some iteration, the algorithm picks a center graph
with $m'$ arcs and $n'$ vertices. Then by the definition of g-HHL
all center graphs have at most $m'$ arcs, so the density of the maximum density
subgraph (over all center graphs) is at most $\sqrt{m'}$.
The density ratio of the maximum density subgraph to that of the chosen
center graph is at most
$$
\frac{\sqrt{m'}}{m'/n'} = \frac{n'}{\sqrt{m'}} \le
\frac{n'}{\sqrt{n'/2}} = \sqrt{2n'} \le \sqrt{2n} .
$$
Here we use the fact that the chosen graph has no isolated vertices,
so $m' \ge n'/2$.
It follows that the density of the chosen center graph is a $\sqrt{2n}$-approximation of the
maximum density of any subgraph.
By Lemma~\ref{lm:appr} we have that the labeling size is larger than the size of the optimal \hl\ by at most $O(\sqrt n \log n)$ factor.
\end{proof}

Since \hhl\ is a special case of \hl\ we have
\begin{corollary}
g-HHL is an $O(\sqrt n \log n)$-approximation algorithm for \hhl.
\end{corollary}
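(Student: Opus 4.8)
The plan is to derive this immediately from Theorem~\ref{thm:greedy}, using the fact that the class of hierarchical labelings is a subclass of all labelings. First I would observe that every \hhl\ is in particular a valid \hl: by definition an \hhl\ respects some order $\pi$, but it is still required to obey the cover property, so it is a feasible solution to the \hl\ problem. Consequently, the set of feasible \hhl\ solutions is contained in the set of feasible \hl\ solutions.

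The key step is then the standard monotonicity of a minimization problem under shrinking of its feasible region. Since optimizing over \hhl\ minimizes total labeling size over a smaller feasible set, the optimal value can only grow: writing $\opt$ for the two optimal sizes, we have $\opt_{\hl} \le \opt_{\hhl}$. Theorem~\ref{thm:greedy} already guarantees that the labeling produced by g-HHL has size at most $O(\sqrt n \log n)\cdot \opt_{\hl}$. Chaining the two inequalities gives $O(\sqrt n \log n)\cdot \opt_{\hl} \le O(\sqrt n \log n)\cdot \opt_{\hhl}$, which is exactly the claimed approximation guarantee against the optimal \hhl.

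There is essentially no obstacle here. The only point worth checking is the containment of feasible sets, namely that a hierarchical labeling is genuinely a hub labeling, and this is immediate from the definitions since the cover property is imposed on \hhl\ as well. Everything else is the routine observation that restricting the feasible region cannot decrease the value of a minimization problem, so the ratio bound transfers verbatim from \hl\ to \hhl.
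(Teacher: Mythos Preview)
Your proposal is correct and matches the paper's approach exactly: the paper simply notes that \hhl\ is a special case of \hl, so $\opt_{\hl}\le\opt_{\hhl}$, and the corollary follows immediately from Theorem~\ref{thm:greedy}.
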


\subsection{Distance Greedy} \label{sec:udgreedy}

We show that d-HHL finds an \hhl\ of size within an $O(\sqrt{n}\log n\log D)$
factor of the optimal \hl\ size. But first we need to extend our concept of hub labels.

Cohen et al.~\cite{CHKZ-03} defined a more general notion of {\em hub labels for a given set $U$ of vertex pairs}.
Such labels are required to have a vertex $w \in L(u)\cap L(v)$ which is on a shortest path between $u$ and $v$ for each
$[u,v]\in U$. The $O(\log n)$ approximation algorithm described in Section \ref{sec:approx}
works for this more general notion of \hl; Lemma~\ref{lm:appr} and
Theorem~\ref{thm:greedy} hold.

\begin{theorem}
d-HHL is an $O(\sqrt n \log n\log D)$-approximation algorithm for \hl.
\end{theorem}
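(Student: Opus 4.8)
The plan is to reduce the analysis to $O(\log D)$ essentially independent applications of the generalized-\hl\ greedy bound, one per distance scale, and to pay a factor of $\log D$ for summing over the scales. Since edge lengths are at least $1$ and the diameter is $D$, every positive-distance pair $[u,v]$ has a level $\lfloor \log_2 \dist(u,v)\rfloor \in \{0,1,\ldots,\lfloor \log_2 D\rfloor\}$, so there are only $O(\log D)$ distinct levels. I would let $U_i$ be the set of level-$i$ pairs, so that the sets $U_i$ partition all pairs that must be covered, and I would compare d-HHL on each $U_i$ to the optimal generalized \hl\ for that set.

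First I would record the structural consequence of the weight function $W$. Because $W(u,v)=n^{2\lfloor \log_2 \dist(u,v)\rfloor}$ grows by a factor of $n^2$ between consecutive levels, the weight of a single maximum-level uncovered pair exceeds the combined weight of all strictly lower-level uncovered pairs. Hence, as already noted in Section~\ref{sec:greed}, d-HHL processes the levels in decreasing order: while level $i$ is the highest uncovered level, every selected center graph maximizes the number of newly covered level-$i$ pairs (ties being broken by lower levels), and by the end of this phase all of $U_i$ is covered. In other words, restricted to $U_i$, the center graphs chosen during phase $i$ are exactly those that g-HHL would choose on the generalized instance whose universe is $U_i$.

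Next I would transport the density estimate of Theorem~\ref{thm:greedy} to each level. Fix level $i$ and consider only the level-$i$ edges of the center graphs. At a phase-$i$ step the selected center graph has the maximum number $m'$ of uncovered level-$i$ edges over all center graphs, so the maximum level-$i$ density over all subgraphs is at most $\sqrt{m'}$, while the chosen graph (having no isolated vertex with respect to its level-$i$ edges) has level-$i$ density at least $1/2$; the computation of Theorem~\ref{thm:greedy} then gives a density ratio at most $\sqrt{2n}$. Invoking Lemma~\ref{lm:appr} with $f(n)=\sqrt{2n}$ for the generalized instance $U_i$, the cost of covering $U_i$ is within an $O(\sqrt n \log n)$ factor of the optimal generalized \hl\ for $U_i$. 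Since the optimal \hl\ covers all pairs, its restriction to $U_i$ is a feasible generalized labeling for $U_i$, so $\opt(U_i)\le \opt$. Summing the $O(\log D)$ per-level bounds yields $|L|\le O(\sqrt n \log n)\cdot O(\log D)\cdot \opt = O(\sqrt n \log n \log D)\cdot \opt$.

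The delicate point, and the step I expect to be the main obstacle, is the cost accounting across levels. A single center-graph selection adds its center to the labels of all non-isolated vertices at once, and these hubs simultaneously cover pairs of several levels; the vertices that are non-isolated only through lower-level pairs are paid for yet contribute nothing to the level-$i$ coverage analyzed above. To make the per-level argument legitimate I would charge each newly created label hub to the highest level among the pairs it newly covers, so that the hubs charged to level $i$ are precisely those responsible for newly covering a level-$i$ pair, and then verify that, level by level, this charged cost coincides with the cost measured by the phase-$i$ greedy run on $U_i$ (the level-$i$-non-isolated vertices), to which the density bound and Lemma~\ref{lm:appr} apply. Ensuring that this charging matches the quantity controlled by the greedy bound, in particular that pairs covered as byproducts during higher-level phases are accounted at their own level rather than inflating the cost attributed to a higher level, is the crux of the proof.
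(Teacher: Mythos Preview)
Your proposal takes essentially the same route as the paper: decompose the pairs by distance level, use the weight function to argue that d-HHL processes the levels from high to low and that within a fixed level its vertex selections coincide with those of g-HHL on that level's uncovered pairs, apply the $O(\sqrt n\log n)$ bound of Theorem~\ref{thm:greedy} (via Lemma~\ref{lm:appr}) once per level, and sum the $O(\log D)$ contributions using $\opt_i\le\opt$.

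The one refinement in the paper worth adopting is the definition of $U_i$: the paper takes $U_i$ to be the level-$i$ pairs \emph{still uncovered when phase $i$ begins} (i.e., not already covered as a side effect of a higher-level phase), rather than all level-$i$ pairs as you do. With this choice the statement ``phase $i$ of d-HHL $=$ g-HHL on $U_i$'' is literally correct, and your last-sentence worry about pairs covered as byproducts of higher phases is handled by construction---such pairs simply never enter $U_i$---while $\opt_i\le\opt$ still holds since $U_i$ is a subset of all pairs. The paper then passes directly from this identification to the size bound $\sum_i O(\sqrt n\log n)\,\opt_i$, without a separate hub-by-hub charging argument; in particular, the issue you single out (hub additions during phase $i$ to vertices that are non-isolated only through strictly-lower-level edges) is not explicitly treated in the paper's short proof either.
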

\begin{proof}
Let \opt\ denote the size of the  optimal \hl.
Let $U_i$ be a set of vertex pairs at level~$i$ which are not covered by vertices at higher levels when we run d-HHL.
Let $\hl_i$ be the optimal \hl\ to cover vertex pairs from $U_i$ and let $\opt_i$ be size of $\hl_i$.
Since $U_i$ is a subset of all vertex pairs, $\opt_i$ doesn't exceed \opt.
By Theorem~\ref{thm:greedy} we can use the g-HHL algorithm to find $O(\sqrt n \log n)$ approximation for $\hl_i$.

Now let's return to d-HHL.
Since every two pairs at the same level have the same weight and weights of all lower-level vertex pairs are negligible,
at the consecutive set of iterations in which d-HHL
covers $U_i$ it picks the same vertices as  g-HHL when we run it on $U_i$.

So the labels found by d-HHL have size
$$
\sum_{i=0}^{\lfloor\log D\rfloor} O(\sqrt n\log n)\opt_i \le \sum_{i=0}^{\lfloor\log D\rfloor} O(\sqrt n\log n)\opt = O(\sqrt n\log n\log D)\opt.
$$
\end{proof}

\begin{corollary}
d-HHL is an $O(\sqrt n \log n\log D)$-approximation algorithm for \hhl.
\end{corollary}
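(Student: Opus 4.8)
The plan is to obtain this corollary from the theorem just proved in exactly the same way the corresponding corollary for g-HHL followed from Theorem~\ref{thm:greedy}. The single idea needed is that hierarchical labelings form a subclass of all labelings, so the optimum over the restricted class cannot be smaller than the optimum over the full class.

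First I would recall that d-HHL, as described in Section~\ref{sec:greed}, always outputs a hierarchical labeling: it processes the center graph of each vertex exactly once, and the levels of the selected vertices are monotonically decreasing, so the resulting labeling respects the induced order. Hence the object produced by d-HHL is a genuine \hhl, and the size guarantee we are about to use applies to it.

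Next I would write $\opt_{\hl}$ and $\opt_{\hhl}$ for the sizes of an optimal \hl\ and an optimal \hhl. Since every \hhl\ is in particular an \hl\ obeying the cover property, the minimization defining $\opt_{\hl}$ ranges over a superset of the feasible labelings defining $\opt_{\hhl}$, and therefore $\opt_{\hl} \le \opt_{\hhl}$. The preceding theorem bounds the size of the d-HHL labeling by $O(\sqrt n \log n \log D)\cdot \opt_{\hl}$; combining this with $\opt_{\hl} \le \opt_{\hhl}$ yields a size of at most $O(\sqrt n \log n \log D)\cdot \opt_{\hhl}$, which is the claimed bound.

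The main (and essentially only) point to get right is the direction of the inequality between the two optima. The approximation factor is stated against the \emph{smaller} \hl\ optimum, which is the stronger assertion; it transfers for free to the \emph{larger} \hhl\ optimum precisely because \hhl\ is a restriction of \hl, so no new argument or calculation is required.
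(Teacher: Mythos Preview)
Your proposal is correct and matches the paper's approach exactly: the corollary is stated immediately after the theorem with no separate proof, relying (as for the analogous g-HHL corollary) on the single observation that \hhl\ is a special case of \hl, so $\opt_{\hl}\le\opt_{\hhl}$ and the approximation guarantee against $\opt_{\hl}$ transfers automatically. Your additional remark that d-HHL actually outputs an \hhl\ is already established in Section~\ref{sec:greed} for all greedy \hhl\ variants, so nothing further is needed.
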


\subsection{Weighted Greedy} \label{sec:uwgreedy}

Although w-HHL is motivated by the approximation algorithm
of Cohen et al., it does not achieve $O(\log n)$ approximation.
We show that w-HHL finds an \hhl\ of size larger than the size of the optimal \hl\ by  an $O(\sqrt n \log n)$
factor.
The key to the analysis is the following lemma.
\begin{lemma} \label{lm:density}
If $G(V,E)$ is a graph with no isolated vertices, then
$G$ is an $O(\sqrt{n})$-approximation of the maximum density
subgraph of $G$.
\end{lemma}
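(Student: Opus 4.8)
The plan is to compare the density $m/n$ of $G$ (writing $m=|E|$ and $n=|V|$) directly against an upper bound on the density of an \emph{arbitrary} subgraph, reusing the edge-counting trick already employed in the proof of Theorem~\ref{thm:greedy}. Let $H$ be a maximum density subgraph of $G$, with $n_H$ vertices and $m_H$ edges, so its density is $m_H/n_H$; I want to show $\frac{m_H/n_H}{m/n} = O(\sqrt{n})$.

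First I would bound the density of \emph{any} subgraph from above in terms of its number of edges. Since $m_H \le \binom{n_H}{2} < n_H^2/2$, we get $n_H > \sqrt{2m_H}$, and hence
$$
\frac{m_H}{n_H} < \frac{m_H}{\sqrt{2m_H}} = \sqrt{\frac{m_H}{2}} \le \sqrt{m},
$$
using $m_H \le m$. Thus the maximum density over all subgraphs is at most $\sqrt{m}$. Next I would invoke the hypothesis that $G$ has no isolated vertices: every vertex has degree at least one, so $2m = \sum_{v} \deg(v) \ge n$, giving $m \ge n/2$. Combining the two facts, the ratio of the density of the maximum density subgraph to the density of $G$ is at most
$$
\frac{\sqrt{m}}{m/n} = \frac{n}{\sqrt{m}} \le \frac{n}{\sqrt{n/2}} = \sqrt{2n} = O(\sqrt{n}),
$$
which is exactly the claimed bound.

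The step requiring the most care is the first bound on subgraph density: the naive estimate that any subgraph on at most $n$ vertices has density at most $(n-1)/2$, combined with $m/n \ge 1/2$, only yields an $O(n)$ ratio. The improvement to $O(\sqrt{n})$ comes precisely from bounding the density of the densest subgraph by $\sqrt{m}$ (via the number of \emph{edges}) rather than by the number of vertices, which is the same observation that drives Theorem~\ref{thm:greedy}. Indeed, this lemma is essentially a restatement of the density-ratio bound established there, isolated as a standalone statement about a single graph and its maximum density subgraph.
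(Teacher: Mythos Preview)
Your proof is correct and follows essentially the same approach as the paper: both bound the density of any subgraph by $\sqrt{m}$ (you via $m_H < n_H^2/2 \Rightarrow m_H/n_H < \sqrt{m_H/2} \le \sqrt{m}$, the paper via $m' \le \min(m,n'^2)$ and a two-case argument), and then both invoke $m \ge n/2$ from the no-isolated-vertices hypothesis to conclude that the ratio is at most $\sqrt{2n}$. The only cosmetic difference is that the paper's $m' \le n'^2$ bound is robust to self-loops (which do occur in center graphs), whereas your $m_H < n_H^2/2$ assumes a simple graph; this is harmless since the $O(\sqrt{n})$ conclusion is unaffected.
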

\begin{proof}
Consider a subgraph $(V', E')$ of $G$.
Let $|V| = n$, $|E| = m$, $|V'| = n'$, $|E'| = m'$.
Then
$$
m' \le \min(m, {n'}^2) = n' \min \left(\frac{m}{n'} , n' \right) \le n' \sqrt m .
$$
where the last step follows since if $n' \le \sqrt m$,
$\min \left(\frac{m}{n'} , n' \right) = n' \le\sqrt m$, and if
$n' > \sqrt m$, $\min \left(\frac{m}{n'} , n' \right) = \frac{m}{n'} \le\sqrt m$.

Since $G$ goes not have isolated vertices, $m \ge n/2$, so we have
$$
\frac{m'}{n'} \le \sqrt m = \frac{m}{n} \frac{n}{\sqrt m} \le
\frac{m}{n} \frac{n}{\sqrt{n/2}} \le
\frac{m}{n} \sqrt{2n} .
$$
\end{proof}

\begin{theorem}
w-HHL is an $O(\sqrt n \log n)$-approximation algorithm for \hl.
\end{theorem}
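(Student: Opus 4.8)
The plan is to mirror the structure of the g-HHL analysis (Theorem~\ref{thm:greedy}), with Lemma~\ref{lm:density} playing the role that the simpler density estimate $\sqrt{m'}$ played there. The point of Lemma~\ref{lm:density} is exactly that a center graph without isolated vertices is, as a whole, an $O(\sqrt n)$-approximation of its own maximum density subgraph. Since w-HHL selects at each iteration the center graph $G_v$ of \emph{highest density} (over all $v$), and this selected graph has no isolated vertices (isolated vertices are deleted, per the convention of Section~\ref{sec:up}), I can invoke Lemma~\ref{lm:density} directly to compare the density of the chosen graph to the maximum density over all subgraphs of \emph{that} graph.

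The key step I would carry out is to argue that the density of the center graph chosen by w-HHL is an $O(\sqrt n)$-approximation of the maximum density achievable over \emph{all} subgraphs of \emph{all} center graphs --- i.e.\ of the best coverage-to-cost ratio available at that iteration. First I would fix an iteration and let $G_v$ be the graph w-HHL picks. By Lemma~\ref{lm:density}, the density of $G_v$ is within a factor $O(\sqrt n)$ of the density of the MDS of $G_v$. Next I would use the fact that w-HHL picks the densest center graph: the MDS of any other center graph $G_{v'}$ is itself a subgraph, so its density is at most the density of the MDS of $G_{v'}$, and by the w-HHL selection rule the density of $G_v$ is at least the density of every $G_{v'}$. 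Combining these, the density of the MDS over all center graphs is at most $O(\sqrt n)$ times the density of the graph w-HHL actually selects.

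Once this per-iteration bound is established, the conclusion follows from Lemma~\ref{lm:appr} exactly as in Theorem~\ref{thm:greedy}: w-HHL is running the greedy set-cover procedure and at each step it picks a set whose coverage-to-cost ratio is at least a $1/O(\sqrt n)$ fraction of the maximum. Taking $f(n) = O(\sqrt n)$ in Lemma~\ref{lm:appr} yields a cover within an $O(\sqrt n \log n)$ factor of the optimal \hl, which is the claim.

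The main obstacle I anticipate is the second half of the key step --- relating the density of the \emph{chosen} $G_v$ to the global maximum density subgraph over \emph{all} center graphs, not just within $G_v$. Lemma~\ref{lm:density} only controls $G_v$ against its own MDS, so I must be careful that the global optimum coverage-to-cost ratio (which the greedy set-cover framework compares against) is realized as an MDS of some single center graph $G_{v'}$; this is precisely the reduction established in Section~\ref{sec:approx}, where the MDS of $G_{v'}$ maximizes $|U(v',S',S'')|/(|S'|+|S''|)$. The subtlety is that the highest-\emph{density} center graph selected by w-HHL need not be the one containing the MDS, so the inequality chain must go: global MDS density $\le$ density of $G_v$ chosen by w-HHL (times $O(\sqrt n)$), and justifying each link requires invoking both the selection rule and Lemma~\ref{lm:density} in the correct order.
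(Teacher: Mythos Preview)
Your approach is essentially the paper's: invoke Lemma~\ref{lm:density} to show the chosen center graph is an $O(\sqrt n)$-approximation of the global MDS, then apply Lemma~\ref{lm:appr}. The paper's proof is two sentences and glosses over exactly the subtlety you flag in your last paragraph.

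One small correction to your execution: in your key step you apply Lemma~\ref{lm:density} to the \emph{chosen} graph $G_v$, but that only bounds the MDS of $G_v$ itself, not the global MDS, and your subsequent sentence (``the MDS of any other center graph $G_{v'}$ is itself a subgraph, so its density is at most the density of the MDS of $G_{v'}$'') is circular. The clean chain is the reverse order: let $G_{v^*}$ be the center graph containing the global MDS; apply Lemma~\ref{lm:density} to $G_{v^*}$ (which has no isolated vertices) to get that the global MDS density is at most $O(\sqrt n)\cdot\mathrm{density}(G_{v^*})$; then use the w-HHL selection rule to get $\mathrm{density}(G_{v^*})\le \mathrm{density}(G_v)$. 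Combining these gives the per-iteration $O(\sqrt n)$ bound you want, and Lemma~\ref{lm:appr} finishes as you say.
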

\begin{proof}
At each iteration, w-HHL picks the center graph
with the maximum ratio of the number of edges divided by the number of
vertices. By Lemma \ref{lm:density} the density of this graph is smaller than the density of the densest subgraph of
a center graph by at most $O(\sqrt n)$.
Therefore  by Lemma \ref{lm:appr} w-HHL produces an \hhl\ of size within an  $O(\sqrt n \log n)$
factor of the size of the  optimal \hl.
\end{proof}

\begin{corollary}
w-HHL is an $O(\sqrt n \log n)$-approximation algorithm for \hhl.
\end{corollary}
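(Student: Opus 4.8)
The plan is to derive this corollary directly from the preceding theorem, using the fact that every hierarchical labeling is in particular a valid (general) hub labeling. First I would recall, from Section~\ref{sec:greed}, that the labeling produced by w-HHL is itself hierarchical: each center graph is processed exactly once, and a vertex $v$ is added only to the labels of the (not-yet-separated) vertices whose pairs it covers, so the output respects the global order induced by the processing sequence. Hence w-HHL is a legitimate \hhl\ algorithm and its output is a genuine candidate \hhl.

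Next I would relate the two optima. Write $\opt_{\hl}$ and $\opt_{\hhl}$ for the sizes of the optimal \hl\ and the optimal \hhl, respectively. Since the family of hierarchical labelings is a subset of the family of all valid hub labelings, minimizing the labeling size over the smaller family can only yield a larger (or equal) optimum, so $\opt_{\hl} \le \opt_{\hhl}$. Now I chain this with the preceding theorem: the size of the labeling $L$ returned by w-HHL satisfies $|L| = O(\sqrt n \log n)\,\opt_{\hl}$, and combining with $\opt_{\hl} \le \opt_{\hhl}$ gives $|L| = O(\sqrt n \log n)\,\opt_{\hhl}$, which is exactly the claimed guarantee.

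There is no real obstacle here: the argument rests only on the immediate monotonicity $\opt_{\hl} \le \opt_{\hhl}$ of the optimum under restricting to the hierarchical family, together with the already-established fact that w-HHL outputs a hierarchical labeling. This is the same reduction that justifies the corresponding corollaries for g-HHL and d-HHL, and it carries over verbatim.
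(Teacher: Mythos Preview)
Your proof is correct and is exactly the argument the paper uses (made explicit before the analogous corollary for g-HHL): since every \hhl\ is an \hl, $\opt_{\hl}\le\opt_{\hhl}$, and combining this with the preceding theorem yields the bound. There is nothing to add.
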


\section{Lower Bounds}
\label{sec:lb}

In this section we show that g-HHL, d-HHL and w-HHL do not give a poly-log approximation.
We present graphs for which these algorithms find a labeling worse than the optimal \hhl\ by a polynomial factor.
We also show that our upper bounds are fairly tight.

\subsection{Greedy}\label{sec:lgreedy}

We show that for a graph in Figure~\ref{fig:bad-g} g-HHL finds a labeling larger by
 a factor of  $\Omega(\sqrt n)$ than the optimal \hhl\ (and therefore the optimal \hl).

\begin{lemma}\label{tm:bad-g}
There is a graph family for which g-HHL finds \hhl\ of size $\Omega(n^{3/2})$ while the optimal \hhl\ size is $O(n)$.
\end{lemma}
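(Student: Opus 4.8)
The plan is to construct an explicit graph family on which g-HHL is forced into a poor sequence of choices. The target gap is $\Omega(\sqrt n)$: optimal \hhl\ of size $O(n)$ versus g-HHL producing $\Omega(n^{3/2})$. The natural design is to plant two competing structures. One structure, call it the ``good'' gadget, has relatively few edges per center graph but is such that a handful of well-chosen high-importance hubs cover everything cheaply — yielding the $O(n)$ optimum. The second structure, the ``trap'', contains one vertex (or a few vertices) whose center graph has the largest raw edge count, so that g-HHL, which greedily picks the center graph with the \emph{most edges} regardless of how many vertices that forces into labels, is lured into selecting it (and then its successors) first. The key is that these high-edge center graphs have poor density: many vertices must receive the hub, so each greedy step pays a large labeling cost for its coverage.

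\emph{First} I would lay out the vertex set as, roughly, a clique-like or complete-bipartite ``trap'' component $K$ on $\Theta(\sqrt n)$ vertices attached to $\Theta(n)$ ``pendant'' vertices arranged so that shortest paths between the pendants are forced through $K$. The center graph of each trap vertex $v\in K$ then has $\Theta(n)$ edges (one per pair of pendants routed through $v$), making it maximal in edge count, but covering those pairs via $v$ requires adding $v$ to $\Theta(n)$ labels. \emph{Second}, I would verify the optimum: a careful importance order places a small hitting set (the $\Theta(\sqrt n)$ trap vertices, or better, a single apex) at the top so that canonical \hhl\ gives each pendant only $O(1)$ or $O(\sqrt n)$ hubs and the total is $O(n)$; here I would invoke the canonical-\hhl\ characterization theorem to compute the optimum cleanly by choosing a good $\pi$.

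\emph{Third}, I would trace g-HHL's execution. Because g-HHL maximizes edge count, it repeatedly selects the high-edge, low-density trap center graphs, each contributing $\Theta(\sqrt n)$ hubs spread over $\Theta(\sqrt n)$ label additions, but summed over $\Theta(\sqrt n)$ such selections — and crucially over the $\Theta(n)$ pendants that each acquire $\Theta(\sqrt n)$ hubs — the total labeling size reaches $\Theta(n^{3/2})$. I would make the tie-breaking and the ordering of greedy choices fully deterministic so the lower bound is not an adversarial artifact of ties but genuinely forced by the ``most edges'' rule.

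\emph{The main obstacle} I anticipate is simultaneously (a) guaranteeing that the trap center graphs genuinely dominate in edge count at \emph{every} relevant iteration, so that g-HHL cannot ``escape'' into the cheap good-gadget hubs early, and (b) keeping the optimum truly $O(n)$. These pull against each other: making the trap edge-heavy tends to raise the cost of covering it optimally too. The resolution is to ensure the good solution covers the trap's pairs with a \emph{shared} high-importance hub set of size only $O(\sqrt n)$ whose hubs appear in many labels ``for free'' under the canonical construction, while g-HHL, blind to density, pays per-pair. Balancing the two component sizes at $\Theta(\sqrt n)$ and $\Theta(n)$ to hit exactly the $n$ versus $n^{3/2}$ separation is the delicate counting step; the referenced figure presumably fixes the precise wiring, and I would confirm the edge-count and hub-count bookkeeping against that construction.
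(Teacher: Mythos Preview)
Your strategic framing is right in spirit, but the concrete sketch has a structural gap that would prevent it from yielding the desired separation. In your description, the trap component $K$ is \emph{both} the set of high-edge-count lures for g-HHL \emph{and} (per your ``Second'' step) the good hub set that the optimal order places at the top. That cannot produce a gap: if g-HHL also selects the vertices of $K$ first---which it will, since by your own account those center graphs have maximal edge count---then g-HHL and the optimal order agree on the decisive first choices and the resulting labelings have the same size. Your fallback of ``a single apex'' does not help either: an apex lying on all shortest pendant--pendant paths would have the largest center graph of all, so g-HHL would (correctly) pick it first. The arithmetic is also internally inconsistent: you assert each trap center graph has $\Theta(n)$ edges from pendant pairs and that picking it costs $\Theta(n)$ label additions, but then tally ``$\Theta(\sqrt n)$ label additions'' per step; and if each of the $\Theta(n)$ pendants really ends up with $\Theta(\sqrt n)$ hubs under the optimal order too, the optimum is $\Theta(n^{3/2})$, not $O(n)$.

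The paper resolves this by using \emph{three} layers in a \emph{directed} graph rather than two. There are $k$ sources $a_1,\dots,a_k$, then $k{+}1$ middle vertices $b_1,\dots,b_{k+1}$ with a complete bipartite orientation $a_i\to b_j$, and finally $k(k{+}1)$ sinks $c_{ij}$ where each $b_i$ points to its own private block $c_{i1},\dots,c_{ik}$; thus $n=\Theta(k^2)$. The \emph{good} hub set is the $b$-layer: placing all $b$'s first gives every sink a backward label of size~$2$ and total size $O(k^2)=O(n)$. The \emph{trap} is the separate $a$-layer: since each $a_i$ reaches every other vertex, its center graph has $(k{+}1)^2+1$ edges, exactly one more than any $b_i$'s $(k{+}1)^2$. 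That one-edge margin---and the fact that picking an $a_i$ leaves every other $a_j$'s center graph untouched while shrinking every $b$'s---forces g-HHL to exhaust all $a$'s before any $b$, so each $a_i$ lands in the backward label of every sink and the total is $\Theta(k^3)=\Theta(n^{3/2})$.

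So the missing idea is to \emph{decouple} the decoy from the good hub set: you need a layer of vertices that each reach essentially the whole graph (hence large center graphs, and hence each such choice pollutes $\Theta(n)$ labels) yet are individually useless as shared hubs, sitting alongside a separate layer that \emph{is} the right hub set but whose center graphs are engineered to have strictly fewer edges at every step. Once that separation is in place the obstacle you flagged disappears---there is no tie-breaking issue, and both the $O(n)$ optimum and the $\Omega(n^{3/2})$ greedy cost follow from direct counts.
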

\begin{proofatend}
Consider the directed graph $G=(V,A)$ in Figure~\ref{fig:bad-g}.
The graph $G$  has $n=\Theta(k^2)$ vertices $V=\{a_1,\ldots, a_k, b_1,\ldots, b_{k+1}\}\cup \{c_{ij} \mid 1\le i\le k+1, 1\le j\le k\}$. The arcs are $A=\{(a_i,b_j)\mid 1\le i\le k, 1\le j\le k+1\}\cup\{(b_i,c_{ij})\mid 1\le i\le k+1, 1\le j\le k\}$ all of length $1$.

Consider the center graphs when g-HHL starts and the labeling is empty.
Shortest paths containing $a_i$ include the path from $a_i$ to itself, the
paths from $a_i$ to $b_x$, and the paths from $a_i$ to $c_{xy}$, so
number of edges in the center graph of $a_i$ is
$$
1 + (k+1) + k(k+1) =(k+1)^2 + 1 \ .
$$
Shortest paths containing $b_i$ include the path from $b_i$ to itself, $k$ paths from $a_j$ to $b_i$, another $k$ paths from $b_i$ to $c_{ij}$, and the $k^2$ paths from $a_j$ to $c_{ij}$ for a total of
$$
1 + k + k + k^2 = (k+1)^2 \ .
$$
Shortest paths containing $c_{ij}$ include the path from
$c_{ij}$ to itself, from $c_{ij}$ to $b_i$, and the $k$ paths from
$c_{ij}$ to $a_x$ for a total of
$$
1 + 1 + k = k+2 \ .
$$
So g-HHL will pick an $a_i$ vertex for some $i$ first.
Note that if when g-HHL  picks an $a_i$ vertex, the center graph of
$a_j$, $j\not= i$ does not change, and the center graphs
of the $b_i$'s and the $c_{ij}$'s  loose edges.
Therefore g-HHL will continue picking $a$-vertices until there are none left.
After that, a center graph of some $b_i$ has $k+1$ edges and a center graphs of $c_{ij}$
has $2$ edges.
So g-HHL will pick all $b$ vertices next, and then all the
$c$ vertices. 

The order found by g-HHL is $a_1,\ldots,a_k,b_1\ldots,b_{k+1}$ followed by $c$ vertices and
the labeling it produced is as follows.
$|L_f(a_i)| = |L_b(a_i)| = 1$, $|L_f(b_i)| = 1 + k$, $L_b(b_i) = 1 + k$,
$|L_f(c_{ij})| = 1$, and $|L_b(c_{ij})| = 2 + k$.
Therefore the total size of the labeling is $\Omega(k^3)$ = $\Omega(n^{3/2})$.

A better order for this graph is $b_1,\ldots,b_{k+1},a_1,\ldots,a_k$ followed by $c$ vertices.
The canonical labeling corresponding to this order is as follows.
 $|L_f(a_i)| = (k+1) + 1$, $|L_b(a_i)| = 1$,
$|L_f(b_i)| = |L_b(b_i)| = 1$, $|L_f(c_{ij})| = 1$, and
$|L_b(c_{ij})| = 2$.
The total size of this labeling is $O(k^2) = O(n)$.
\end{proofatend}

\begin{figure}
\begin{center}
    \subfloat[Bad example for g-HHL and d-HHL.\label{fig:bad-g}]{\includegraphics[scale=1,page=1,height=6em]{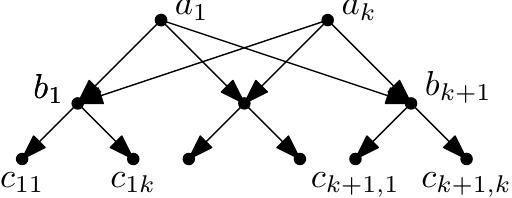}}
\qquad
\qquad
\qquad
    \subfloat[Bad example for w-HHL.\label{fig:bad-w}]{\includegraphics[scale=1,page=2,height=6em]{figures.pdf}}
\end{center}
\caption{Bad examples for greedy \hhl\ algorihtms.}\label{fig:bad}
\end{figure}

We have shown that for $G=(V,E)$, g-HHL
produces a labeling larger than the optimal one by
an  $\Omega(\sqrt n)$ factor, so
our $O(\sqrt n \log n)$ upper bound on the approximation ratio of g-HHL of Section \ref{sec:ugreedy} is
tight up to a logarithmic factor.

\subsection{Distance Greedy}

We show that for a graph in Figure~\ref{fig:bad-g} d-HHL finds a labeling larger by
 a factor of  $\Omega(\sqrt n)$ than the optimal \hhl\ (and therefore the optimal \hl).

\begin{lemma}
There is a graph family for which d-HHL finds \hhl\ of size $\Omega(n^{3/2})$ while the optimal \hhl\ size is $O(n)$.
\end{lemma}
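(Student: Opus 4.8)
The plan is to reuse the graph $G$ of Figure~\ref{fig:bad-g} and show that d-HHL selects its vertices in exactly the same order as g-HHL does in Lemma~\ref{tm:bad-g}, hence produces the identical labeling of size $\Omega(n^{3/2})$; the optimal \hhl\ of size $O(n)$ is then the one already exhibited there. The only genuinely new work is to verify the selection order under the weighted, level-based rule of d-HHL.

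First I would record the level structure of $G$. Every arc has length $1$, so the only nonzero distances are $1$ (for the pairs $[a_i,b_j]$ and $[b_i,c_{ij}]$) and $2$ (for the pairs $[a_i,c_{xy}]$, whose unique shortest path is $a_i \to b_x \to c_{xy}$). Thus there are exactly two levels: level $0$ (distance $1$) and level $1$ (distance $2$), and by the design of $W$ the weight of a single level-$1$ edge exceeds the total weight of all level-$0$ edges. Consequently d-HHL first maximizes the number of newly covered level-$1$ pairs, using level-$0$ coverage only as a tie-breaker.

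Next I would count level-$1$ edges in each center graph at the start: the center graph of $a_i$ contains the pairs $[a_i,c_{xy}]$ for all $x,y$, giving $k(k+1)$ edges; that of $b_x$ contains $[a_i,c_{xy}]$ for all $i,y$, giving $k^2$ edges; and that of $c_{xy}$ contains only $[a_i,c_{xy}]$ for all $i$, giving $k$ edges. Since $k(k+1) > k^2 > k$, d-HHL selects an $a$-vertex first. The key observation to push through is that this dominance is preserved throughout the level-$1$ phase: selecting $a_i$ covers only the pairs $[a_i,c_{xy}]$, which leaves every remaining $a_j$ with its full count $k(k+1)$ while each $b_x$ drops to at most $k^2$; hence d-HHL keeps choosing $a$-vertices until all are exhausted. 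At that point every level-$1$ pair $[a_i,c_{xy}]$ is covered, so the maximum uncovered level drops to $0$, and among the surviving level-$0$ pairs $[b_i,c_{ij}]$ the center graph of $b_i$ covers $k$ of them versus $1$ for each $c_{ij}$; thus d-HHL picks all $b$-vertices and then all $c$-vertices. This is precisely the order $a_1,\ldots,a_k,b_1,\ldots,b_{k+1}$ followed by the $c$-vertices from Lemma~\ref{tm:bad-g}.

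Finally, since d-HHL processes the vertices in the same order and adds the same hubs at each step, it produces the identical labeling, whose total size is $\Omega(k^3)=\Omega(n^{3/2})$, while the order $b_1,\ldots,b_{k+1},a_1,\ldots,a_k$ followed by the $c$-vertices already gives an \hhl\ of size $O(k^2)=O(n)$. I expect the main obstacle to be the bookkeeping in the third step: one must argue cleanly that the level-$1$ weight strictly dominates so that level-$0$ coverage never overturns the $a$-versus-$b$ comparison, and that the greedy preference for $a$-vertices is maintained across the entire level-$1$ phase rather than merely at the first iteration.
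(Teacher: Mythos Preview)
Your proposal is correct and follows essentially the same approach as the paper: both use the graph of Figure~\ref{fig:bad-g}, observe that the level-based weights force d-HHL to prioritize distance-$2$ pairs, count that each $a_i$ hits $k(k+1)$ such pairs versus $k^2$ for each $b_i$ (with the $a$-count unaffected and the $b$-count only decreasing as $a$'s are removed), and conclude that d-HHL produces the same order as g-HHL so that Lemma~\ref{tm:bad-g} finishes the argument. Your write-up is somewhat more detailed in handling the subsequent level-$0$ phase, but the underlying idea is identical.
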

\begin{proofatend}
Consider the directed graph $G=(V,A)$ depicted in Figure~\ref{fig:bad-g}.
There are paths of length 0, 1, and 2.
While there are some paths of length 2 yet uncovered, d-HHL selects a vertex to hit
the maximum number of paths with length 2.
Weights of all paths of length 0 and 1 matter only when d-HHL chooses between two vertices which hit exactly the same number of paths of length 2.

At the beginning $a_i$ hits $k(k + 1)$ paths of length 2 and $b_i$ hits $k^2$ paths of length 2.
So d-HHL selects $a_i$. As d-HHL proceeds, the number of paths of length 2 hit by $b_i$ decreases
and the number of paths of length 2 hit by $a_i$ remains the same $k(k + 1)$.


Therefore the order found by d-HHL is $a_1,\ldots,a_k,b_1\ldots,b_{k+1}$ followed by all the $c$ vertices.
Exactly the same order is produced by g-HHL.
From Lemma~\ref{tm:bad-g} we know that
the size of the canonical labeling of this order is $\Omega(n^{3/2})$  while the size of optimal \hhl\ is $O(n)$.
\end{proofatend}

So d-HHL can also produce a labeling of size $\Omega(\sqrt{n})$ away from optimal.
This makes a fairly good match with the $O(\sqrt n\log n\log D)$ upper bound established in Section \ref{sec:udgreedy}.

Theorem~\ref{tm:dhhl} gives $O(h \log n\log D)$ bound for the maximum label size produced by d-HHL.
The graph in Figure~\ref{fig:bad-g} gives us a good lower bound on the maximum label size as the following lemma specifies.

\begin{lemma}\label{lm:bad-h}
There is a graph family for which d-HHL finds \hhl\ with maximum label size $\Omega(h \log D)$.
\end{lemma}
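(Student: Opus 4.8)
The plan is to exhibit, for parameters $h$ and $t=\Theta(\log D)$, a single undirected graph that stacks $t$ geometrically scaled copies of (an undirected variant of) the bad example in Figure~\ref{fig:bad-g}, all sharing one \emph{sink} vertex $v$, and to argue that d-HHL is forced to insert $\Omega(h)$ fresh hubs into $L(v)$ at each of the $t$ scales. For each scale $i\in\{1,\dots,t\}$ I would build a gadget $\Gamma_i$ with $h$ \emph{sources} $a^i_1,\dots,a^i_h$, a private \emph{pivot} $b^i$ joined to $v$, and edges of length $2^{i-1}$, so that $\dist(v,a^i_x)=2^i$ and the unique shortest $v$--$a^i_x$ path is $v$--$b^i$--$a^i_x$. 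To make each source dominate its pivot I attach to every $a^i_x$ a private fan of $\Theta(h^2)$ leaves (again at length $2^{i-1}$), so that the number of level-$i$ pairs hit by $a^i_x$ exceeds the number hit by $b^i$; inter-gadget communication is routed along a separate backbone that avoids $v$ and all the $b^i$, so that $v$ and each $b^i$ carry only their own local pairs. The diameter is then $\Theta(2^t)$, giving $t=\Theta(\log D)$.

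The heart of the argument tracks d-HHL. Because the weight $W$ makes d-HHL process uncovered pairs in strictly decreasing order of level, when it works on level $i$ every higher-level pair is already covered; in particular $v$, being a low-coverage leaf toward every gadget, is selected only at the very end and never covers its own pairs $[v,a^i_x]$ prematurely. Within level $i$ the competition reproduces the order analysed in Lemma~\ref{tm:bad-g}: each fan makes its source $a^i_x$ hit strictly more level-$i$ pairs than the pivot $b^i$, so all $h$ sources are selected before $b^i$, and selecting $a^i_x$ covers the pair $[v,a^i_x]$, forcing $a^i_x\in L(v)$. These hub sets are disjoint across scales, hence $|L(v)|\ge\sum_{i=1}^{t}\Omega(h)=\Omega(h\log D)$. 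It is essential that this is a defect of d-HHL's greedy rule rather than of the instance: the optimal \hhl\ orders each $b^i$ before its sources and spreads the $\Theta(h)$ cost over the (scale-specific) sources, keeping every label of size $O(h)$, which is also why Theorem~\ref{tm:dhhl}'s bound cannot be pushed below $\Omega(h\log D)$.

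Finally I would verify that the construction has highway dimension $\Theta(h)$. The lower bound $\mathrm{HD}=\Omega(h)$ is immediate: at scale $r\approx 2^i$ the $h$ fan--fan shortest paths through the distinct sources $a^i_1,\dots,a^i_h$ are \rs, lie within $2r$ of $v$, and are pairwise vertex-disjoint, so any hitting set for $S_r(v)$ has size $\ge h$. The upper bound $\mathrm{HD}=O(h)$ is the main obstacle. One must show that for every centre $v'$ and radius $r$ the set $S_r(v')$ admits a hitting set of size $O(h)$, simultaneously handling three kinds of \rs\ paths: those internal to a single gadget, hit by its $O(h)$ sources; those that cross between gadgets, all of which run along the backbone and must be hit by $O(1)$ backbone vertices; and the paths $v$--$b^i$--$a^i_x$ themselves. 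The delicate points are designing the backbone so that no cross-gadget shortcut shortens any $v$--$a^i_x$ path (which would destroy the level structure that keeps d-HHL's per-scale behaviour clean) and checking that the $(r,2r)$-closeness condition never forces \rs\ paths from more than $O(1)$ scales into a single $S_r(v')$, so that the per-scale cost of $h$ does not accumulate inside one neighbourhood.
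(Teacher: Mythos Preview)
Your proposal aims at a genuinely stronger statement than the lemma requires. You try to build a family in which both $h$ and $D$ grow and d-HHL produces a label of size $\Omega(h\log D)$ with the $\log D$ factor truly contributing: you stack $t=\Theta(\log D)$ geometrically scaled copies of the bad gadget and argue that $\Omega(h)$ fresh hubs land in $L(v)$ at each scale. If that construction could be made to work it would separate the $\log D$ factor, but your own write-up flags the hard parts---bounding the highway dimension from above uniformly over all $(v',r)$, designing the backbone so no cross-gadget shortcut spoils a $v$--$a^i_x$ path, and ensuring only $O(1)$ scales fall into any single $S_r(v')$---as unresolved. As it stands this is a plan, not a proof.

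The paper's argument bypasses all of this by taking $D$ constant. It simply reuses the single-scale graph of Figure~\ref{fig:bad-g}, whose diameter is $D=2$; hence $\log D=\Theta(1)$ and the claim reduces to exhibiting a label of size $\Omega(h)$. The maximum-degree lower bound gives $h\ge k+1$, while the $b$-vertices form a hitting set of size $k+1$ for all nontrivial paths, so $h=\Theta(k)$. From the earlier analysis of this graph (the proof of Lemma~\ref{tm:bad-g}), d-HHL outputs $|L(c_{ij})|=k+2=\Theta(h)=\Theta(h\log D)$, and the lemma follows in a few lines.

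So the contrast is: your route, if the gaps were closed, would yield a sharper lower bound that actually exercises the $\log D$ term; the paper instead observes that the lemma as stated is satisfied trivially once one lets $D$ be a constant and piggybacks on the $\Omega(h)$ label already established for Figure~\ref{fig:bad-g}.
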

\begin{proofatend}
Consider the directed graph $G=(V,A)$ in Figure~\ref{fig:bad-g}. The diameter of $G$ is 2. Let's find the highway dimension $h$ of $G$.

Abraham et al.~\cite[Lemma 3.5]{ADFGW-13} show that the maximum degree of a vertex is a lower bound
on the HD.
Thus $h$ is at least $k+1$.
Note that all $b_i$ form a hitting set of size $k+1$ for all paths of length greater than 0.
So any $S_r(v)$ has a hitting set with at most $k+2$ vertices and thus $h = \Theta(k)$.

In the labels found by d-HHL we have $|L(c_i)| = k+2$ (cf. the proof of Lemma~\ref{tm:bad-g}).
Since $h= \Theta(k)$ and $D=2$ we have $|L(c_i)| = \Theta(h \log D)$.
\end{proofatend}
Lemma~\ref{lm:bad-h} shows that the upper bound of Theorem~\ref{tm:dhhl} is right up to a $O(\log n)$ factor.

\subsection{Weighted Greedy}

We show that for a graph in Figure~\ref{fig:bad-w} w-HHL finds a labeling of size larger than the size of the optimal \hhl\  by a
factor of $\Omega(\sqrt[3] n)$ (and therefore the optimal \hl).

\begin{lemma}
There is a graph family for which w-HHL finds \hhl\ of size $\Omega(n^{4/3})$ while the optimal \hhl\ size is $O(n)$.
\end{lemma}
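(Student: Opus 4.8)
The plan is to follow the template established in the proof of Lemma~\ref{tm:bad-g}: exhibit the directed graph of Figure~\ref{fig:bad-w}, determine the vertex order that w-HHL produces by tracking the densities of the center graphs, evaluate the size of the corresponding canonical labeling, and compare it against a hand-crafted order whose canonical labeling has size $O(n)$. As before I would choose a parameter $k$ so that the graph has $n = \Theta(k^3)$ vertices; the goal is then to show that w-HHL builds a labeling of total size $\Theta(k^4) = \Theta(n^{4/3})$, while a better order yields a labeling of size $\Theta(k^3) = \Theta(n)$.

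The essential difference from the g-HHL and d-HHL analyses is that w-HHL selects by \emph{density} rather than by edge count, and this makes it considerably harder to mislead: in the graph of Figure~\ref{fig:bad-g} the densest center graphs already belong to the ``good'' middle vertices $b_i$, which is exactly why that example does not separate w-HHL from the optimum. Consequently the graph of Figure~\ref{fig:bad-w} must be engineered so that the center graph of highest density belongs to a vertex that should sit \emph{low} in an efficient hierarchy. Concretely I would arrange the vertices into layers whose sizes are tuned so that the ``trap'' vertices induce balanced, and hence high-density, bipartite center graphs, whereas the vertices that the optimal hierarchy uses as top hubs induce lopsided center graphs (few sources and many sinks, or vice versa) of low density, in the spirit of the $a_i$ center graphs in the proof of Lemma~\ref{tm:bad-g}.

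The analysis would proceed in the same two-phase manner. First I would compute, at the initial empty-labeling state, the number of edges and the number of non-isolated vertices of the center graph of each vertex class, and verify that the trap vertices have strictly larger density. Then, mirroring the monotonicity argument in the proof of Theorem~\ref{tm:dhhl} and in the d-HHL lower bound, I would argue that selecting a trap vertex leaves the center graphs of the remaining trap vertices essentially unchanged while eroding the center graphs of the candidate top hubs, so that w-HHL is forced to exhaust all $\Theta(k^2)$ trap vertices before it ever touches a top hub. Summing the label contributions over this forced order gives the $\Omega(k^4)$ bound, and writing down the canonical labeling for the order that places the $\Theta(k)$ top hubs first yields the matching $O(k^3)$ upper bound.

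I expect the main obstacle to be the density bookkeeping across iterations: unlike an edge count, density is a ratio, so I must guarantee that covering pairs with each successive trap vertex never raises the density of a top-hub center graph above that of a still-unpicked trap vertex. Getting the layer sizes to make this invariant hold while simultaneously forcing the $n^{1/3}$ separation (rather than the easier $n^{1/2}$ of the previous examples, the weaker exponent reflecting precisely that density-greedy is harder to fool) is the delicate part of the construction.
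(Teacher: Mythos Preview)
Your strategic outline is roughly the right shape, but it is only a plan: you never specify the construction, and several of the details you do commit to are off in ways that would bite when you try to execute the argument.

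First, two concrete mismatches with the paper. The graph of Figure~\ref{fig:bad-w} is \emph{undirected}, not directed; and the number of ``trap'' vertices is $k$, not $\Theta(k^2)$. The vertex set is $\{a,b\}\cup\{c_1,\dots,c_k\}\cup\{d_{ij}:1\le i\le k,\ 1\le j\le l\}$ with $l=2k^2$, so $n=\Theta(kl)=\Theta(k^3)$ as you anticipated. Vertex $a$ is adjacent to every $d_{ij}$ (edge length~3), $b$ is adjacent to every $c_i$, and each $c_i$ is adjacent to $d_{i1},\dots,d_{il}$ (all other edges have length~2, so that the shortest $d_{ij}$--$d_{ir}$ path goes through $c_i$, not $a$). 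The intended top hub is the single vertex $b$; the trap vertices are the $c_i$.

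Second, and more substantively, your ``balanced versus lopsided'' heuristic is not the mechanism the proof uses, and I do not see how to make it work directly. After $a$ is selected (which both w-HHL and the optimal order agree on), the center graph of $b$ and the center graph of each remaining $c_i$ have \emph{exactly the same} set of non-isolated vertices, namely $b$, the surviving $c$'s, and their attached $d$'s. The density comparison therefore reduces to a pure edge count: $c_i$ covers the $\binom{l}{2}$ internal pairs $[d_{ij},d_{ir}]$ of its own star, while $b$ covers the $\Theta(t^2 l)$ cross-star pairs. Choosing $l=2k^2$ is precisely what makes the former dominate the latter for every $t\le k$. So the invariant you need across iterations is not ``trap graphs stay balanced while hub graphs become lopsided'' but rather ``vertex sets stay equal, and the trap's edge surplus survives''; the paper verifies this by subtracting the two edge counts explicitly as a function of the number $t$ of remaining $c$'s.

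Finally, the $\Omega(n^{4/3})$ count comes not from $\Theta(k^2)$ traps each costing $\Theta(k^2)$, but from $k$ traps where the $t$-th one (in reverse order) contributes $\Theta(tl)$ to the label size, summing to $\Theta(k^2 l)=\Theta(k^4)$. The $O(n)$ upper bound follows from the order $a,b,c_1,\dots,c_k$ followed by the $d$'s.
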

\begin{proofatend}
Consider the undirected graph $G=(V,E)$ in Figure~\ref{fig:bad-w} .
The  vertices of $G$ are
$V=\{a, b\}\cup
\{c_{i} \mid 1\le i\le k\} \cup
\{d_{ij} \mid 1\le i\le k, 1\le j\le l\}$, so $n=|V|=\Theta(kl)$.
The edges are
$
E=\{(a,d_{ij})\mid 1\le i\le k, 1\le j\le l \} \cup
\{(b,c_{i})\mid 1\le i\le k \} \cup
\{(c_i,d_{ij}\mid 1\le i\le k, 1\le j\le l)\}
$
All edges have length $2$ except for those adjacent to $a$,
which have length $3$.
The lengths of the edges are set so that shortest paths between distinct $d$ vertices
adjacent to the same $c$ vertex go through the $c$ vertex.

We set $l = 2k^2$; so $k = \Theta(\sqrt[3]n)$.
As we shall see, this is large enough to make w-HHL
choose the $c$ vertices before choosing $b$.
However, this causes the  $c$ vertices to be added to the labels of many $d$
vertices and leads to a large total label size.

Consider the center graphs when w-HHL starts and the labeling is empty.
Since the graph is connected, all center graphs have no isolated vertices,
so all the denominators of  the densities of the center graphs  are the same and equal $n$.

Consider now the numerators (number of pair covered) by the different vertices. Vertex
$a$ covers the shortest paths between the $d$ vertices
adjacent to different $c$'s. Therefore
the center graph of $a$ has $\Omega((kl)^2)$ edges, which is asymptotically more
than the number of edges in the other center graphs. So w-HHL chooses $a$
to be the most important vertex.

Following this first choice of $a$, all vertex pairs consisting of $a$ and $d$'s are covered,
except for the pairs of $d$'s of the form $d_{ij}$ and $d_{ir}$ (both adjacent to $c_i$).
The vertex $d_{ij}$ is an endpoint of
every uncovered shortest path containing
it
 and therefore the density of the center graph of $d_{ij}$
is constant.
As we shall see, the density of other center graphs is higher, so the $d$ vertices
are chosen last.

We show that after choosing $a$, w-HHL chooses $c$ vertices until
there are no $c$ vertices left.
Suppose the number of remaining $c$ vertices is $t: 1 \le t \le k$.
We  show that the density of the center graph of each of the remaining $c$'s is larger than the density of the center graph of $b$. 
First we observe that at this point the number of vertices in the center graph of $b$ and in the center graph of each of the remaining $c$ vertices is the same, namely $1 + t + tl$.

Shortest paths through $b$ include the paths between
$c_i$ and $c_j$ for $i < j$, the shortest paths from $c_i$ to $d_{jr}$
for $i \ne j$ and paths from $b$ to all the vertices that have not been picked yet.
So the number of edges in the center graph of $b$ is
\begin{equation} \label{eq:b-vertex} t(t-1)/2 + t(t-1)l + (1 + t + tl)\ . \end{equation}
Shortest paths through $c_i$ include the paths between $d_{ij}$
and $d_{ir}$ for $j < r$,
the paths from $d_{ir}$ to $c_j$ for $j\ne i$,
the paths from $b$ to $d_{ij}$,
and the paths from $c_i$ to all the vertices that have not been picked yet.
So the number of edges in the center graph of a remaining  $c$ vertex is
\begin{equation} \label{eq:c-vertex} l(l-1)/2 + l(t-1) + l + (1 + t + tl) \ . \end{equation}
Subtracting Equation (\ref{eq:b-vertex}) from Equation (\ref{eq:c-vertex}), and using the facts that
$l = 2k^2$ and $k \ge t$, we get

\begin{align*}
& l(l-1)/2 + lt - t(t-1)/2 - t(t-1)l = \\
& l^2/2 + 2lt + t/2 - l/2 - t^2/2 - t^2l \ge \\
& l^2/2 - t^2l = 2k^4 - 2k^3 > 0.
\end{align*}

So w-HHL chooses $a$ first, followed by all $c$ vertices, then  $b$  and all $d$ vertices.
The size of the corresponding canonical labeling is
$$
n + \sum_{t=1}^k (1 + t + tl) + 1 + kl = \Omega(lk^2) =
\Omega(n^{4/3}) .
$$
A better ordering is the one which puts $a$ is first, followed by $b$, the
$c$ vertices, and finally the $d$ vertices.
The size of the corresponding canonical labeling is
$$
n + (1 + k + kl) + k(1 + l) + kl = O(n) .
$$
\end{proofatend}

Therefore on the graph in Figure~\ref{fig:bad-w} the size of the labeling produced by w-HHL is larger than the optimal by  a factor of $\Omega(\sqrt[3] n)$.
There is a factor of $\Omega(\sqrt[6] n \log n)$ gap between this lower bound
and the $O(\sqrt n \log n)$ upper bound of Section \ref{sec:uwgreedy}.

\section{NP-Completeness}
\label{sec:np}

\subsection{Undirected Graphs}

In this section we prove that the problems of finding an optimal \hl\ and an optimal \hhl\ are NP-hard by a reduction from Vertex Cover (VC). The reduction takes an instance of VC
consisting of a graph $G$ and an integer $k$ and produces an undirected graph $G'$ and an integer $k'$ such
that the following conditions are equivalent
\begin{enumerate}
  \item There is an \hl\ of size $k'$ in $G'$.
  \item There is an \hhl\ of size $k'$ in $G'$.
  \item There is a VC of size $k$ in $G$.
\end{enumerate}
Our results imply NP-completeness of \hl\ and \hhl\ in
undirected graphs.

Before presenting the reduction we prove the following useful lemma.
\begin{lemma}\label{lm:star}
Let $G=(V,E)$ be a graph and $S$ be a star graph, distinct from $G$,
with a root $s$ and $|V|$ leaves.
Let $G'$ be the union of the graphs $G$ and $S$, with additional edges between
$s$ and some vertices of $V$. If $G'$ is connected then
 there are optimal \hl\ and \hhl\ for $G'$ such that
$s \in L(x)$ for every vertex $x$.
\end{lemma}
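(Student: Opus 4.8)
The claim is that adding a star $S$ with root $s$ and $|V|$ leaves, connected to $G$ via some edges incident to $s$, forces $s$ into every label in some optimal labeling. The intuition is that $S$ has so many leaves (exactly $|V|=n$ of them) that $s$ sits on the shortest path between any two leaves, and indeed between huge numbers of vertex pairs, so making $s$ the single most important vertex is "almost free" and saves many other hubs. The plan is to take an arbitrary optimal \hl\ (resp.\ \hhl) $L$ and transform it into one of size no larger in which $s$ belongs to every $L(x)$.

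**The plan.**

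First I would record the structural facts about $s$ in $G'$. Let $\ell_1,\dots,\ell_n$ be the leaves of $S$; each $\ell_i$ is adjacent only to $s$, so for any two leaves $\ell_i,\ell_j$ the unique shortest path passes through $s$, and more generally $s$ separates each leaf from the rest of $G'$. In particular, for every leaf $\ell_i$ and every vertex $x\neq \ell_i$, every shortest $\ell_i$--$x$ path contains $s$. I would then argue that $s$ lies on a shortest path between an $\Omega(n)$-sized family of pairs, which is what makes it cheap to promote. Next, given an optimal labeling $L$, I would define a modified labeling $L'$ by inserting $s$ into every label: $L'(x)=L(x)\cup\{s\}$ for all $x$. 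This trivially still satisfies the cover property (adding hubs never breaks coverage), so I only need to show $|L'|\le |L|$, i.e.\ that the hubs we are forced to add are compensated by hubs we can delete.

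The accounting is the heart of the proof. Adding $s$ to every label costs at most $n$ extra hubs (one per vertex that did not already contain $s$). To pay for this I would show that, for each leaf $\ell_i$, the label $L(\ell_i)$ in the original optimal solution must already contain at least one hub besides $\ell_i$ itself to cover the pairs $[\ell_i,\ell_j]$; but once $s\in L'(\ell_i)\cap L'(\ell_j)$, that hub is redundant and can be removed. More carefully, I would identify for each leaf a hub in its original label whose sole purpose is covering pairs now covered by $s$, delete it, and check that no other pair loses its cover; this yields at least $n$ deletions matching the $n$ insertions, so $|L'|\le|L|$, forcing equality by optimality of $L$. For the hierarchical case I must additionally respect the order: I would place $s$ as the globally most important vertex (rank $1$), which is consistent with $s$ appearing in every label, and then invoke the canonical-\hhl\ theorem to convert the resulting order into a minimum \hhl\ respecting it, whose size is at most $|L'|$.

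**The main obstacle.**

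The delicate step is the deletion-accounting: I must guarantee that the hubs I remove to pay for inserting $s$ are genuinely redundant after the insertion, and that I can find $n$ distinct such deletions without double-counting. The cleanest route is probably to show directly that in \emph{any} optimal labeling, $L(\ell_i)\setminus\{\ell_i\}$ must be nonempty (since $\ell_i$ must reach the other leaves through some common hub, and the only candidate shared by all leaves is $s$), so the optimum already "wants" to use $s$ for the leaves; promoting $s$ everywhere then only reorganizes hubs rather than adding net cost. The subtlety worth watching is whether a non-$s$ hub in $L(\ell_i)$ might simultaneously cover a pair unrelated to the star, in which case deleting it is not free; here the separator property of $s$ is what saves us, since any shortest path from $\ell_i$ to a non-leaf already passes through $s$, so $s$ dominates every hub that $\ell_i$ could use for external pairs as well.
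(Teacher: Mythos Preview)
Your global strategy---insert $s$ into every label and then delete redundant hubs---is the right instinct, but the accounting does not balance as written. You set $n=|V|$ when naming the leaves $\ell_1,\dots,\ell_n$, yet $G'$ has $2|V|+1$ vertices: the original vertices of $V$, the $|V|$ leaves of $S$, and $s$. So inserting $s$ into every label may cost up to $2|V|$ new hubs, not $|V|$. Your compensating deletions---one redundant hub per leaf label---yield at most $|V|$ savings, which is only half of what you need. (A minor additional issue: it is not literally true that every leaf label must contain a second hub; one leaf $\ell_i$ could in principle have $L(\ell_i)=\{\ell_i\}$ provided $\ell_i$ sits in every other label.)

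The fix your sketch is missing is to also delete leaves from \emph{non-leaf} labels: once $s$ is universal, any occurrence of a leaf $\ell_j$ in $L(s)$ or in some $L(v)$ with $v\in V$ is redundant, because $\ell_j$ lies on no shortest path except those ending at $\ell_j$, and those are now covered by $s$. Tracking both kinds of deletions simultaneously and matching them against the insertions is exactly what the paper does, but it organizes the argument as a sequence of local, individually non-increasing swaps rather than a single global exchange: first swap $u\in L(s)$ for $s\in L(u)$ for each leaf $u$ missing $s$; then replace any leaf hub in a $V$-label by $s$; finally, if some $v\in V$ still lacks $s$, observe that every leaf label must then contain a $V$-vertex (to cover $[u,v]$), remove those, and add $s$ to all remaining $V$-labels. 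Each step is a clean one-for-one (or better) trade, so no global counting is needed.
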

\begin{proofatend}
Let $L$ be an optimal \hl\ (or \hhl) labeling of $G'$.
First, assume that for a leaf $u \in S$ we have that $s \not \in L(u)$.
Since $(s,u)\in G'$ we must have that $u \in L(s)$, and the only pair of vertices covered by $u \in L(s)$
is $[s,u]$.
So if we add $s$ to $L(u)$ and remove $u$ from $L(s)$, we get a valid labeling
of the same size as $L$ which is  optimal.
Therefore we may assume that  $s \in L(u)$ and
$u \not \in L(s)$ for every leaf $u\in S$.

Next, assume that for some $v \in V$, and a leaf $u\in S$ we have that
$u \in L(v)$.
Since $u\in L(v)$ is used only to cover the pair $[u,v]$, we can
remove $u$ from $L(v)$ and add $s$ to $L(v)$ if it is not already there while keeping the labeling valid.
This way we can transform $L$, without increasing its size, to a labeling
such that the labels of $v \in V$ do not contain leaves of $S$.

Finally,
assume  that  there is a vertex $v\in V$ such that $s\not\in L(v)$.
Then $L(v)\cap S =\emptyset$. Since the pair $[u,v]$
for every leaf $u\in S $ has to be covered, $L(u)$ must contain a vertex of $V$.
Remove vertices of $V$ from $L(u)$ for all $u\in S$ and add $s$ to $L(v)$
for all vertices $v \in X$ such that $s$ is not in $L(v)$ already. This keeps the labeling valid
and cannot increase its size.
\end{proofatend}

Now we describe the reduction.
We reduce the problem of deciding whether there is a VC of size at most $k$ in $G$ to the problem of
deciding whether there is an \hl\ of size at most $k'$ in a graph $G'$. Lemma \ref{lm:npc-hhl} shows that
$G'$ has an \hl\ of size at most $k'$ iff it has an \hhl\ of size at most $k'$, so it follows that our reduction also proves that deciding whether there is an \hhl\ of a given size is also NP-complete.
We construct $G'=(V',E')$ from $G=(V,E)$ as follows.
\begin{enumerate}
    \item For each vertex $v\in V$ we add  three vertices, $v_1$, $v_2$, and $v_3$ to $V'$ and two edges $\{v_1,v_2\}$ and $\{v_2,v_3\}$ to $E'$.
    \item For each edge $\{u,v\}\in E$ we add an edge $\{u_1,v_1\}$ to $E'$.
    \item We add a star $S$ with $3|V|$ leaves and a root $s$ to $G'$ and
add $\{s,v_1\}$ to $E'$ for every $v\in V$.
\end{enumerate}

\begin{figure}
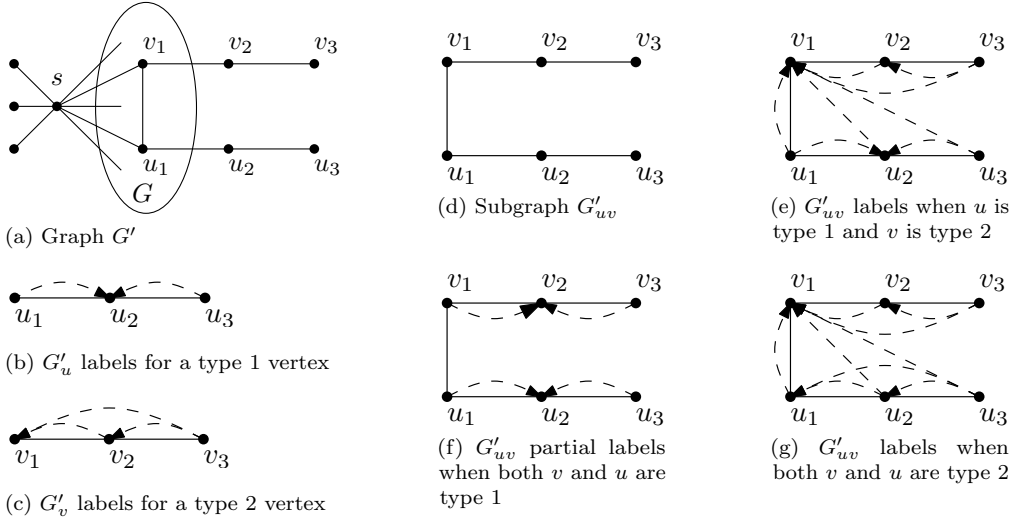

  \captionsetup[subfloat]{singlelinecheck=false}
  \begin{center}
    \begin{minipage}{.35\textwidth}
    \subfloat[Graph $G'$\label{fig:npcg}]{\includegraphics[page=4]{figures.pdf}}

    \begingroup
    \captionsetup[subfloat]{width=\textwidth}
    \subfloat[$G'_u$ labels for a type~1 vertex\label{fig:npcv1}]{\includegraphics[page=9,scale=1.1,width=3cm]{figures.pdf}}
    \\
    \subfloat[$G'_v$ labels for a type 2 vertex\label{fig:npcv2}]{\includegraphics[page=10,scale=1.1]{figures.pdf}}
    \endgroup
    \end{minipage}
    \begin{minipage}{.5\textwidth}
    \subfloat[Subgraph $G'_{uv}$\label{fig:npcp}]{\includegraphics[page=5,scale=1.1]{figures.pdf}}
    \qquad
    \qquad
    \subfloat[$G'_{uv}$ labels when $u$ is type~1 and $v$ is type~2\label{fig:npcpl1}]{\includegraphics[page=7,scale=1.1]{figures.pdf}}
    
    \subfloat[$G'_{uv}$ partial labels when both $v$ and $u$ are type~1\label{fig:npcpl2}]{\includegraphics[page=6,scale=1.1]{figures.pdf}}
    \qquad
    \qquad
    \subfloat[$G'_{uv}$ labels when both $v$ and $u$ are type~2\label{fig:npcpl3}]{\includegraphics[page=8,scale=1.1]{figures.pdf}}

    \end{minipage}
  \end{center}
  \caption{Reduction from VC to HL}
\end{figure}

The graph $G'$ is shown in Figure~\ref{fig:npcg}.
All edges have length 1.
By Lemma~\ref{lm:star} we can assume w.l.g.\ that in an optimal labeling all vertices have $s$ in their labels.
Therefore, all paths between $v_i$ and $u_j$ such that $\{u,v\} \notin E$ are covered (hereinafter when we write $v_i$ we mean $v_i$ for $i=1,2,3$).

For each vertex $v\in V$ we have a subgraph $G'_v$ in $G'$ which is a path $(v_1,v_2,v_3)$.
Any labeling must cover all $[v_i,v_j]$ pairs of $G'_v$.
We show that w.l.g the labeling covers these paths either as in Figure~\ref{fig:npcv1}
in which case we say that $v$ is a \emph{type 1} vertex or as in
Figure~\ref{fig:npcv2} in which case we say that $v$ is a \emph{type 2} vertex.
Note that a type 2 vertex uses one more hub in the labeling, so to reduce the labeling size we want to minimize the number of
type 2 vertices. We will show, however, that the type 2 vertices must form a vertex cover for the labeling to be valid.

\begin{lemma}\label{lm:type}
There is an optimal labeling $L$ of $G'$ such that for each vertex $v\in V$ if $v_1\in L(v_2)$, then $v$ is a type 2 vertex, otherwise $v$ is a type 1 vertex.
\end{lemma}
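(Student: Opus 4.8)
The plan is to prove the statement by a local exchange argument: beginning from an arbitrary optimal labeling, I reshape the hubs that touch each path gadget $(v_1,v_2,v_3)$ into one of the two prescribed configurations, never increasing the total size and never violating the cover property.

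First I would apply Lemma~\ref{lm:star} to assume $s\in L(x)$ for every vertex $x$. The basic observation is that the three internal pairs $[v_1,v_2]$, $[v_2,v_3]$, $[v_1,v_3]$ must be covered by gadget vertices alone: each of their shortest paths lies inside the path $(v_1,v_2,v_3)$ and avoids $s$ and every vertex outside the gadget. I would then exploit the degree structure. Since $v_3$ is a leaf, it lies on no shortest path except those ending at $v_3$, so a hub $v_3\in L(x)$ can only cover the single pair $[v_3,x]$; and since $v_2$ has degree two, every shortest path using $v_2$ as an interior vertex terminates at $v_3$. This yields the crucial classification: the hubs $v_3\in L(v_1)$, $v_3\in L(v_2)$ and $v_2\in L(v_1)$ are \emph{private} to the gadget (each covers only internal pairs), whereas $v_2\in L(v_3)$, $v_1\in L(v_2)$ and $v_1\in L(v_3)$ are the only gadget hubs that can also witness an external pair with endpoint $v_2$ or $v_3$ (because $v_2$, resp.\ $v_1$, lies on every shortest path leaving $v_3$, resp.\ $v_2$ and $v_3$).

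With this in hand I would split on the condition in the statement. If $v_1\notin L(v_2)$, covering $[v_1,v_2]$ forces $v_2\in L(v_1)$; I then ensure $v_2\in L(v_3)$ is present, so that $v_2\in L(v_1)\cap L(v_3)$ covers $[v_1,v_3]$ and $v_2\in L(v_3)$ covers $[v_2,v_3]$, and I delete the redundant private hubs $v_3\in L(v_1),v_3\in L(v_2)$; this is exactly the type~1 picture of Figure~\ref{fig:npcv1}. If $v_1\in L(v_2)$, I instead install $v_1\in L(v_3)$ and $v_2\in L(v_3)$ (covering $[v_1,v_3]$ and $[v_2,v_3]$) and again delete the redundant private hubs, reaching the type~2 picture of Figure~\ref{fig:npcv2}. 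In every such move I add one prescribed hub only in exchange for deleting a private hub of the same count, and I lean on optimality of $L$ to rule out the genuinely wasteful configurations (e.g.\ both $v_3\in L(v_1)$ and $v_3\in L(v_2)$ present while $v_2\notin L(v_3)$, which could be strictly improved); hence the total size never grows.

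The step I expect to be the main obstacle is checking that no external pair is uncovered by these surgeries. The risk is that a hub such as $v_1\in L(v_3)$ (or $v_2\in L(v_3)$) doubles as the witness of some external pair $[v_3,w]$, so deleting it to change the internal pattern would break that pair. The plan is to neutralise this with the private/external classification above: the only hubs I ever \emph{delete} are the private ones $v_3\in L(v_1),v_3\in L(v_2)$, which provably witness no external pair, while the external-capable hubs demanded by the target type are only ever \emph{added} or kept. Carrying out the remaining bookkeeping exhaustively — in particular showing that in the $v_1\in L(v_2)$ case it is never strictly better to cover $[v_1,v_3]$ via the private hub $v_3\in L(v_1)$ than via $v_1\in L(v_3)$, and dually in the type~1 case — is the elementary but somewhat tedious heart of the argument.
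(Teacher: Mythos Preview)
Your overall strategy---local exchange based on the private/external dichotomy of the six gadget hubs---is exactly the paper's approach, and the classification of $v_3\in L(v_1)$, $v_3\in L(v_2)$, $v_2\in L(v_1)$ as private is the key observation. However, the case split as stated does not close.

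The problem is that you pivot on the \emph{initial} value of ``$v_1\in L(v_2)$'' and promise never to delete an external-capable hub. But an optimal labeling may have $v_1\notin L(v_2)$ together with $v_1\in L(v_3)$ (say because $v_1\in L(v_3)$ is the witness for an external pair $[v_3,u_1]$ with $\{u,v\}\in E$). In that situation $v_2\in L(v_1)$ is forced, your procedure leaves the three hubs $v_2\in L(v_1)$, $v_2\in L(v_3)$, $v_1\in L(v_3)$, and the gadget is neither type~1 nor type~2. You cannot drop $v_1\in L(v_3)$ (external), and optimality does not rule this configuration out (it uses three gadget hubs, the same as type~2). The correct move here is to swap the \emph{private} hub $v_2\in L(v_1)$ for $v_1\in L(v_2)$, yielding type~2; but this changes the very predicate you split on. A symmetric oversight shows up in your type~2 branch: you list $v_2\in L(v_1)$ as private but then declare ``the only hubs I ever delete are $v_3\in L(v_1),v_3\in L(v_2)$''; without also deleting $v_2\in L(v_1)$ you can end up with four gadget hubs, or be forced to increase the size when installing $v_1\in L(v_3)$.

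The paper avoids both pitfalls by first swapping away $v_3\in L(v_2)$ and $v_3\in L(v_1)$ (for $v_2\in L(v_3)$ and $v_1\in L(v_3)$ respectively), and only \emph{then} splitting---on whether $v_1\in L(v_3)$, not on $v_1\in L(v_2)$. After that normalisation $v_2\in L(v_3)$ is always present, and the remaining private hub $v_2\in L(v_1)$ is the one that gets swapped for $v_1\in L(v_2)$ (or kept) to decide the type. Your framework is right; just change the pivot variable and allow deletion of all three private hubs, and the argument goes through.
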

\begin{proofatend}
Vertex $v_3\in L(v_2)$
can cover only the pair $[v_2,v_3]$. So
if $v_3\in L(v_2)$ we can remove $v_3$ from $L(v_2)$ and put $v_2$ in $L(v_3)$ instead.
Similarly
$v_3\in L(v_1)$
can cover only the pair $[v_1,v_3]$. So if $v_3\in L(v_1)$ we can remove $v_3$ from $L(v_1)$ and put $v_1$ in $L(v_3)$ instead.
Now if $v_1\in L(v_3)$ then we don't need $v_2\in L(v_1)$ and can replace $v_2\in L(v_1)$ by $v_1\in L(v_2)$
keeping $L$ optimal and making
 $v$  a type 2 vertex.

If $v_1\notin L(v_3)$ then we have $v_2\in L(v_1)$ to cover the pair $[v_1,v_3]$. So either $v$ is a type 1 or
 there an additional hub $v_1\in L(v_2)$. In the latter case we can remove $v_2$ from $L(v_1)$ and put $v_1$ into $L(v_3)$, making $v$ a type 2 vertex.
\end{proofatend}

For an edge $\{u,v\}\in E$ let $G'_{uv}$  be the subgraph of  $G'$ corresponding to this edge as shown in Figure~\ref{fig:npcp}. $G'_{uv}$ contains all shortest paths between $v_i$ and $u_j$.
Note that no vertex of $G'$  other than $u_i$ and $v_j$ hits these paths.
We say that a hub $u_i \in L(v_j)$ or $v_i \in L(u_j)$ is a \emph{$\{u,v\}$-crossing} hub.

\begin{lemma}\label{lm:edge}
If there is an edge $\{u,v\}\in E$ then the labels of $u_i, v_i$, $1\le i\le 3$ contain at least 3 $\{u,v\}$-crossings.
\end{lemma}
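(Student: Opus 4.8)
The plan is to exploit the structural fact noted immediately before the lemma: inside $G'_{uv}$ every shortest path between a $u$-vertex $u_i$ and a $v$-vertex $v_j$ crosses the unique bridge edge $\{u_1,v_1\}$ and passes only through the six vertices $u_1,u_2,u_3,v_1,v_2,v_3$ (a detour through $s$ would strictly lengthen the path, so these paths stay inside $G'_{uv}$). I would then focus on the nine pairs $[u_i,v_j]$ with $1\le i,j\le 3$. Since the labeling is valid, each such pair must be covered by some common hub $w\in L(u_i)\cap L(v_j)$ lying on a shortest $u_i$--$v_j$ path, and by the structural fact $w\in\{u_1,u_2,u_3,v_1,v_2,v_3\}$. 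The whole proof is a charging argument that maps these nine covering requirements onto crossings.

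First I would show that covering each pair forces at least one crossing. For a pair $[u_i,v_j]$, fix a witness hub $w$ as above. If $w=u_k$ is a $u$-vertex, then the membership $u_k\in L(v_j)$ is a $\{u,v\}$-crossing; if $w=v_k$ is a $v$-vertex, then $v_k\in L(u_i)$ is a $\{u,v\}$-crossing. Thus I can assign to each of the nine pairs a crossing $\chi$ that witnesses its coverage, and every such $\chi$ genuinely appears in the labeling.

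Second I would bound how many pairs a single crossing can serve. A crossing of the form $u_k\in L(v_j)$ can only be the witness for pairs whose second coordinate is exactly $v_j$ (the crossing pins down both the hub $u_k$ and the label $L(v_j)$), so it serves at most the three pairs $[u_1,v_j],[u_2,v_j],[u_3,v_j]$; symmetrically $v_k\in L(u_i)$ serves at most the three pairs in ``row $i$''. Hence the map from the nine pairs to their assigned crossings has at most three pairs per crossing, so its image contains at least $\lceil 9/3\rceil = 3$ distinct crossings, giving the claimed lower bound.

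The one point that needs care — and the only real obstacle — is the bookkeeping that distinct label-memberships count as distinct crossings: placing the \emph{same} hub vertex $u_k$ into two \emph{different} labels $L(v_j)$ and $L(v_{j'})$ yields two separate crossings, which is exactly what makes the ``at most three pairs per crossing'' bound hold (each crossing fixes one column or one row of the $3\times 3$ grid of pairs). I would note that the bound is tight and in fact matches the intended construction, since placing $u_1$ in $L(v_1),L(v_2),L(v_3)$ covers all nine pairs using precisely three crossings.
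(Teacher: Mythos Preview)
Your argument is correct, but the paper's proof is considerably shorter and avoids the full $3\times 3$ charging. Instead of looking at all nine pairs $[u_i,v_j]$, the paper considers only the three ``diagonal'' pairs $[u_1,v_1]$, $[u_2,v_2]$, $[u_3,v_3]$. Covering $[u_i,v_i]$ forces a $\{u,v\}$-crossing that lives in $L(u_i)$ or in $L(v_i)$; since the six labels $L(u_1),L(v_1),L(u_2),L(v_2),L(u_3),L(v_3)$ split into three pairwise disjoint pairs $\{L(u_i),L(v_i)\}$, the three resulting crossings are automatically distinct. Your approach reaches the same conclusion via a pigeonhole bound (each crossing fixes one row or one column of the $3\times 3$ grid, hence covers at most three pairs), which is sound but does more work than needed. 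One small remark: your tightness comment that putting $u_1$ into $L(v_1),L(v_2),L(v_3)$ ``covers all nine pairs'' tacitly assumes $u_1\in L(u_i)$ for $i=1,2,3$ as well; those memberships are not crossings, so the count of three crossings is indeed achieved, but the sentence as written slightly undersells what else is required.
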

\begin{proofatend}
Consider three pairs: $[u_1,v_1], [u_2,v_2]$ and $[u_3,v_3]$.
To cover each $[u_i,v_i]$ pair for $i=1,2,3$ we need a $\{u,v\}$-crossing hub.
So $L(u_i)\cup L(v_i)$ contains a $\{u,v\}$-crossing hub.
Since all three $L(u_i)\cup L(v_i)$ are disjoint, $L$ has at least 3 $\{u,v\}$-crossing hubs.
\end{proofatend}

The following lemma shows that the type 2 vertices must form a VC.

\begin{lemma}\label{lm:path}
There is an optimal labeling $L$ for $G'$ such that for each edge $\{u,v\}\in E$ there is at least one type 2 vertex among $u$ and $v$.
\end{lemma}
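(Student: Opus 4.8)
The plan is to prove Lemma~\ref{lm:path} by contradiction, combining the counting structure of Lemma~\ref{lm:edge} with the type classification of Lemma~\ref{lm:type}. First I would recall that, by Lemma~\ref{lm:type}, for each vertex $v$ the labeling covers the internal path $(v_1,v_2,v_3)$ in one of exactly two ways: as a type~1 vertex (cheaper) or as a type~2 vertex (one extra hub, and in this case $v_1\in L(v_2)$). The key observation I would make explicit is how the crossing hubs interact with the type of the endpoints. For an edge $\{u,v\}$, Lemma~\ref{lm:edge} forces at least three $\{u,v\}$-crossing hubs among the labels of $u_1,u_2,u_3,v_1,v_2,v_3$. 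The goal is to show that if \emph{both} $u$ and $v$ are type~1, then these three crossings cannot be realized without either violating the cover property or exceeding the claimed optimal size, so at least one endpoint must be type~2.

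The central step is a case analysis on which $\{u,v\}$-crossing hubs can possibly be present as a function of the endpoint types. Here I would use the structure visible in Figures~\ref{fig:npcpl1}--\ref{fig:npcpl3}: when a vertex is type~1, its hub set on the internal path is ``thin'' (it does not keep $v_1\in L(v_2)$), which limits which pairs $[u_i,v_j]$ a single crossing hub can cover. I would argue that a type~1 vertex forces the three required pairs $[u_1,v_1],[u_2,v_2],[u_3,v_3]$ to each need a \emph{separate} crossing hub, because the thin covering leaves no hub high enough (close enough to $v_1$/$u_1$) to simultaneously cover a deeper pair. Concretely, I expect that if both $u$ and $v$ are type~1, realizing all three crossings costs strictly more than the budget that an optimal labeling achieves, whereas making one endpoint type~2 lets a single shared crossing hub (e.g.\ $v_1\in L(u_2)$ routing through $v_1\in L(v_2)$) absorb two of the three required pairs, trading one internal hub for one saved crossing. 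Thus the type~2 ``upgrade'' pays for itself exactly on covered edges, which is why an optimal labeling will choose it.

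The main obstacle I anticipate is ruling out ``mixed'' or asymmetric ways of placing the three crossings that are not literally the configurations drawn in the figures — that is, proving that no clever optimal labeling can cover $\{u,v\}$ with three crossings while leaving both endpoints type~1 at no extra cost. To handle this I would argue at the level of a global exchange/accounting argument rather than enumerating every local picture: I would show that from any optimal $L$ one can perform local modifications (as in the proofs of Lemmas~\ref{lm:star} and~\ref{lm:type}, removing a hub that covers only one pair and substituting a more useful one) to bring $L$ into a canonical form in which each vertex is cleanly type~1 or type~2 and each covered edge is witnessed by type~2 vertices, without ever increasing $|L|$. The exchange must be checked to preserve the cover property for all affected pairs $[u_i,v_j]$ and not to disturb already-covered pairs elsewhere, which is the delicate bookkeeping.

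Finally, I would assemble these pieces: starting from the optimal labeling produced by Lemma~\ref{lm:type}, I apply the exchange argument edge by edge. If some edge $\{u,v\}$ has both endpoints type~1, the three forced crossings (Lemma~\ref{lm:edge}) give slack that a local swap converts into making one endpoint type~2 at no net cost, contradicting the assumption that no such optimal labeling exists unless we had already chosen one with a type~2 endpoint per edge. Performing this for every edge of $E$ yields an optimal labeling in which the type~2 vertices form a vertex cover of $G$, which is exactly the statement of the lemma.
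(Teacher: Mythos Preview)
Your high-level plan matches the paper's: start from the normalized optimal labeling of Lemma~\ref{lm:type}, and for an edge $\{u,v\}$ with both endpoints of type~1 perform a local exchange that upgrades one endpoint to type~2 without increasing $|L|$. Where your proposal goes wrong is in the \emph{mechanism} of the counting.

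You argue that when both endpoints are type~1 ``the three required pairs $[u_1,v_1],[u_2,v_2],[u_3,v_3]$ each need a separate crossing hub,'' and that upgrading to type~2 ``lets a single shared crossing hub absorb two of the three required pairs.'' Both claims are off. The three diagonal pairs \emph{always} need three distinct crossings --- that is exactly the content of Lemma~\ref{lm:edge}, and it holds irrespective of the types of $u$ and $v$. A type~2 endpoint does \emph{not} let you drop below three crossings. So the ``slack'' you plan to exploit is not there.

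The actual extra cost in the both-type-1 case comes from the \emph{off-diagonal} pairs. Because type~1 means $u_1\notin L(u_2)$ and $v_1\notin L(v_2)$, the vertex $u_1$ cannot serve as the common hub for $[v_1,u_2]$, the vertex $v_1$ cannot serve for $[u_1,v_2]$, and neither $u_1$ nor $v_1$ can serve for $[u_2,v_2]$. Together with the crossing for $[u_1,v_1]$ this already forces four pairwise distinct $\{u,v\}$-crossings; one then checks that whichever way $[u_2,v_2]$ is covered, a fifth crossing is needed for $[u_2,v_3]$ (or symmetrically $[v_2,u_3]$). Hence the local cost on $G'_{uv}$ is at least $2+2+5=9$. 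Upgrading $v$ to type~2 puts $v_1\in L(v_2)$ and $v_1\in L(v_3)$, which lets the single hub $v_1$ witness all off-diagonal pairs once it is placed in $L(u_1),L(u_2),L(u_3)$; now exactly three crossings suffice and the local cost is $2+3+3=8$. The swap therefore \emph{strictly} decreases $|L|$, which directly contradicts optimality (it is not merely ``at no net cost''). The paper then checks, as you anticipate, that the swap does not disturb coverage of pairs involving $u_i$ or $v_i$ with third vertices.

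So your outline is salvageable, but you need to replace the incorrect ``share two of the three diagonal crossings'' story with the correct ``type~1 on both sides forces $\ge 5$ crossings because of the off-diagonal pairs'' count.
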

\begin{proofatend}
By Lemma~\ref{lm:type} we can assume that every vertex is either a type 1 or  a type 2 vertex in $L$.

Suppose $\{u,v\}\in E$ and  both $u$ and $v$ are type 1 vertices.
A partial labeling is shown in Figure~\ref{fig:npcpl2}.
Since $u_1\notin L(u_2)$, $u_1$ cannot cover the pair $[v_1,u_2]$.
Similarly, $v_1$ cannot cover the pair $[u_1,v_2]$  and neither $u_1$ nor $v_1$ can cover
the pair $[u_2,v_2]$.
With one more hub to cover the pair $[u_1,v_1]$ it follows that we need at least 4 different $\{u,v\}$-crossing hubs already.
Let $u_2\in L(v_2)$ (the case $v_2\in L(u_2)$ is similar).
Then we need one more $\{u,v\}$-crossing hub to cover the pair $[u_2,v_3]$ and the total number of hubs
to cover shortest paths in $G'_{uv}$ is at least~9.

If we make $v$ a type 2 vertex then 8 hubs suffice as shown in Figure~\ref{fig:npcpl1}.
We didn't change other hubs in labels of $u_i$ so all $[u_i,w_j]$ pairs for $w\ne v$ remain covered.
Also all $[v_i,w_j]$ pairs remain covered, since $v_2$ can't be used as a hub for any $[v_1,w_j]$ pair.
\end{proofatend}

The following lemma gives a reduction from VC to \hl.
\begin{lemma}\label{lm:npc}
The graph $G$ has a VC of size $k$ if and only if $G'$ has an \hl\ of size $12|V| + 1 +3|E| +k$.
\end{lemma}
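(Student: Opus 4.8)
The plan is to prove the two directions of the equivalence by exhibiting an explicit labeling in one direction and extracting a vertex cover from an optimal labeling in the other, with a careful hub count that ties the size exactly to $k$. Throughout I invoke Lemma~\ref{lm:star} to assume the universal hub $s$ lies in every label, and Lemma~\ref{lm:type} to assume every gadget $G'_v$ is realized as either a type~1 or a type~2 vertex. These two normalizations fix the ``skeleton'' of the labeling and reduce the problem to counting a few well-understood groups of hubs: the self-hubs, the copies of $s$, the per-gadget covering hubs, and the crossing hubs.

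For the direction from a vertex cover to a labeling, I would start from a VC $C\subseteq V$ with $|C|=k$ and build $L$ by (i) placing $s$ in every label and each vertex in its own label; (ii) covering each gadget $G'_v$ as a type~2 vertex if $v\in C$ and as a type~1 vertex otherwise (Figures~\ref{fig:npcv1} and~\ref{fig:npcv2}); and (iii) for every edge $\{u,v\}\in E$ adding exactly three $\{u,v\}$-crossing hubs. I then verify the cover property: intra-gadget pairs and pairs routed through $s$ are handled by (i)--(ii), while for each edge the fact that at least one endpoint is type~2 lets the three crossing hubs of Figure~\ref{fig:npcpl1} cover all nine crossing pairs $[u_i,v_j]$. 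Summing the groups---the self-hubs, the copies of $s$, the per-gadget covering hubs (one extra for each of the $k$ type~2 vertices), and the $3|E|$ crossing hubs---gives exactly $12|V|+1+3|E|+k$.

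For the converse I would take an \hl\ of size $12|V|+1+3|E|+k$ and, using Lemmas~\ref{lm:star}, \ref{lm:type} and~\ref{lm:path}, assume it has $s$ in every label, a well-defined type for each gadget, and a type~2 endpoint on every edge. Partitioning its hubs into the same groups, the self-hubs and the copies of $s$ contribute a fixed amount independent of the labeling; Lemma~\ref{lm:edge} shows each edge forces at least three crossing hubs, accounting for $3|E|$; and each type~2 gadget costs exactly one hub more than a type~1 gadget. Hence the size is at least the fixed amount plus the number $t$ of type~2 vertices, so $t\le k$. By Lemma~\ref{lm:path} the type~2 vertices form a vertex cover of $G$, yielding a VC of size at most $k$ (padding to size exactly $k$ if needed). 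Combined with the forward direction, $G$ has a VC of size $k$ iff $G'$ has an \hl\ of size $12|V|+1+3|E|+k$.

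The main obstacle is the exact bookkeeping of the crossing hubs: I must show simultaneously that three crossings per edge \emph{suffice} when one endpoint is type~2 (the upper bound, via the explicit assignment of Figure~\ref{fig:npcpl1}) and that three are \emph{necessary} (the lower bound, Lemma~\ref{lm:edge}), and also that an edge with both endpoints type~1 cannot be cheaper overall. This pins the crossing contribution to exactly $3|E|$, so that the only remaining degree of freedom---the set of type~2 vertices---is forced to coincide with a vertex cover, and the two counts meet at the same constant rather than leaving a gap.
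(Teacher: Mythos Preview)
Your proposal is correct and follows essentially the same route as the paper's own proof: in the forward direction you put $s$ in every label, assign type~2 exactly to the cover vertices, and add three crossing hubs per edge; in the converse you normalize via Lemmas~\ref{lm:star}, \ref{lm:type}, and \ref{lm:path}, lower-bound the crossing hubs by $3|E|$ using Lemma~\ref{lm:edge}, and deduce that the at most $k$ type~2 vertices form a vertex cover. The only cosmetic difference is that the paper also cites Figure~\ref{fig:npcpl3} (both endpoints type~2) alongside Figure~\ref{fig:npcpl1} when arguing that three crossing hubs per edge suffice.
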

\begin{proofatend}
Assume $G$ has a vertex cover of size at most $k$.
We construct an \hl\ of $G'$ as follows.
We put $s$ and $v$ itself in $L(v)$ for every $v\in V'$.
Since there are $6|V| +1$ vertices in $G'$ this contributes $12|V|+1$ hubs.
Then we make each vertex of the
  vertex cover a
 type 2 vertex and each vertex which is not in the vertex cover  a type 1 vertex.
We use 2 hubs to cover $G'_v$ for a type 1 vertex and 3 hubs for a type 2 vertex, for the total of $2|V|+k$ hubs.
For each edge $\{u,v\}\in E$ we use 3 $\{u,v\}$-crossing hubs to cover $G'_{uv}$ as shown in Figure~\ref{fig:npcpl1} and Figure~\ref{fig:npcpl3}.
So the total labeling size is $12|V| + 1 +3|E| +k$.

Now assume that $L$ is an optimal \hl\ of $G'$ of size $12|V| + 1 +3|E| +k$.
By Lemma~\ref{lm:star}
we know that any vertex $w\in G'$ has $s$ in its label and by
 Lemma~\ref{lm:type} we know that there exists such an $L$ that makes every vertex $v\in V$  either a type 1 or a type 2 vertex.
By Lemma~\ref{lm:edge} we know that there are at least 3 $\{u,v\}$-crossing hubs for any edge $\{u,v\}\in E$.
Since the size of $L$ is at most $12|V| + 1 +3|E| +k$ it follows that there are at most $k$ type 2 vertices in $L$.
Lemma~\ref{lm:path} implies that these $k$ vertices form a vertex cover.
\end{proofatend}

\begin{theorem} \label{tm:npc}
The problem of deciding whether an undirected graph has an \hl\ of size at most $k$ is NP-complete.
\end{theorem}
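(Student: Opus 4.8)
The plan is to combine two ingredients: membership in NP, and the polynomial-time reduction from Vertex Cover already encapsulated in Lemma~\ref{lm:npc}.

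First I would argue membership in NP. The natural certificate is the labeling $L$ itself. Given $L$, one verifies in polynomial time that its total size $|L| = \sum_{v} |L(v)|$ is at most $k$, and that it obeys the cover property. For the latter, precompute all pairwise distances (e.g.\ by running Dijkstra from every vertex), and then for each of the $O(n^2)$ pairs $[s,t]$ scan $L(s) \cap L(t)$ for a hub $w$ with $\dist(s,w) + \dist(w,t) = \dist(s,t)$. Each pair is checked in polynomial time, so the whole verification is polynomial; hence \hl\ is in NP.

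For NP-hardness I would invoke Lemma~\ref{lm:npc}, which reduces Vertex Cover---a classical NP-complete problem---to \hl. Given a VC instance $(G,k)$, the construction of $G'$ (three vertices and two path edges per vertex of $G$, one crossing edge per edge of $G$, and a star with $3|V|$ leaves rooted at $s$ and joined to every $v_1$) is computable in polynomial time: $G'$ has $6|V|+1$ vertices and $O(|V|+|E|)$ edges. Setting $k' = 12|V| + 1 + 3|E| + k$, Lemma~\ref{lm:npc} states that $G$ has a vertex cover of size $k$ if and only if $G'$ has an \hl\ of size $k'$, which is exactly the correctness of the reduction. Since the reduction is polynomial and $k'$ is computed in constant arithmetic from the VC instance, NP-hardness of \hl\ follows, and together with membership in NP we obtain NP-completeness.

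I expect the theorem to follow almost immediately once Lemma~\ref{lm:npc} is available; the substantive work lies in the chain of Lemmas~\ref{lm:star} through~\ref{lm:npc} (in particular Lemma~\ref{lm:path}, which forces the type~2 vertices to form a vertex cover). The only genuinely new step here is NP membership, and its only subtlety is precomputing distances so that the cover property can be checked efficiently; no step presents a significant obstacle.
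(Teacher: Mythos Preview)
Your proposal is correct and matches the paper's approach: the paper states Theorem~\ref{tm:npc} without an explicit proof, relying on Lemma~\ref{lm:npc} for NP-hardness and leaving NP membership implicit, and you have simply filled in the routine NP-membership argument. Your assessment that the real work lies in Lemmas~\ref{lm:star}--\ref{lm:npc} is exactly right.
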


The following lemma shows that our reduction is in fact also a valid reduction from  VC to finding an optimal \hhl.

\begin{lemma}\label{lm:npc-hhl}
The graph $G'$ has an \hl\ of size $k'$ if and only if it has an \hhl\ of size $k'$.
\end{lemma}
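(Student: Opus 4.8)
The plan is to prove the two directions separately, one of which is essentially free. Since every \hhl\ is by definition an \hl, an \hhl\ of size $k'$ is already an \hl\ of size $k'$, which settles one direction and shows $\min$-\hhl\ $\ge \min$-\hl. The whole content is the converse, and for this it is cleanest to prove that the minimum \hl\ size and the minimum \hhl\ size of $G'$ coincide, since this immediately yields the threshold equivalence for every $k'$. Lemma~\ref{lm:npc} already pins the minimum \hl\ size of $G'$ to $12|V|+1+3|E|+k^\ast$, where $k^\ast$ is the minimum vertex cover size of $G$, so it suffices to exhibit, for a minimum vertex cover $C$ (the set of type~2 vertices), a \emph{hierarchical} labeling of exactly that size. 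As an \hhl\ can never be smaller than the best \hl, this forces the two optima to be equal.

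First I would take the explicit labeling built in the forward direction of Lemma~\ref{lm:npc} and fix its crossing hubs in a canonical way: for each edge $\{u,v\}\in E$ pick one endpoint lying in $C$ (one exists by Lemma~\ref{lm:path}), say $v$, and place $v_1$ into each of $L(u_1),L(u_2),L(u_3)$. These three $\{u,v\}$-crossings already cover all nine pairs $[u_i,v_j]$: since every $u_i$--$v_j$ shortest path leaves the $v$-gadget only through $v_1$, and since $v$ is type~2 so that $v_1\in L(v_1)\cap L(v_2)\cap L(v_3)$, the hub $v_1\in L(u_i)\cap L(v_j)$ covers $[u_i,v_j]$. Together with the within-gadget hubs (type~1: $v_2\in L(v_1),L(v_3)$; type~2: $v_1\in L(v_2),L(v_3)$ and $v_2\in L(v_3)$) and the ubiquitous hub $s$ from Lemma~\ref{lm:star}, this is precisely a labeling of size $12|V|+1+3|E|+k^\ast$.

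The core step is to produce an order $\pi$ that this labeling respects. I would rank $s$ first, then all cover cores $\{v_1 : v\in C\}$ in an arbitrary but fixed linear order, and finally all remaining gadget vertices and star leaves, arranging things within each gadget so that in a type~1 gadget $v_2$ precedes $v_1$ and $v_3$, and in a type~2 gadget $v_2$ precedes $v_3$. One then checks that every hub relation $x\in L(y)$ with $x\neq y$ satisfies $\pi(x)<\pi(y)$: the intra-gadget relations hold by the per-gadget ordering, $s$ precedes everything, and for a crossing $v_1\in L(u_i)$ the provider $v_1$ lies in the second block, hence precedes $u_i$ — if $u\notin C$ then $u_1,u_2,u_3$ all sit in the last block, and if $u\in C$ we chose $v$ to be the $C$-earlier of the two endpoints, so $\pi(v_1)<\pi(u_1)$ while still $\pi(v_1)<\pi(u_2),\pi(u_3)$. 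Thus the labeling is hierarchical, giving an \hhl\ of the minimum \hl\ size.

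I expect the main obstacle to be exactly this last consistency requirement: the crossing hubs must be simultaneously comparable across \emph{all} edges, and edges with both endpoints in $C$ threaten to create cyclic ordering constraints. The tie-breaking rule ``let the $C$-earlier endpoint provide the crossing hub'' is what keeps the constraints acyclic, and the point that makes it work is that this single provider, being type~2, carries $v_1$ as a hub in all three of its own labels, so one provider suffices to cover the entire $3\times 3$ block of cross pairs. Once this is established, everything else is bookkeeping against the size count of Lemma~\ref{lm:npc}.
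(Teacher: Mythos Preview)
Your proof is correct and follows essentially the same strategy as the paper: both arguments note that the ``if'' direction is trivial, and for the converse both exhibit a total order on $V'$ that is respected by an optimal-size \hl\ of $G'$ (with all type~2 cores $v_1$ ranked above type~1 gadgets and with the per-gadget orders $v_1,v_2,v_3$ for type~2 and $v_2,v_1,v_3$ for type~1), thereby certifying that this labeling is in fact an \hhl. Your treatment is a bit more explicit than the paper's on two points---you spell out the tie-breaking rule for edges with both endpoints in the cover (let the $C$-earlier core be the provider), and you verify directly that a single provider $v_1$ covers the full $3\times 3$ block of cross pairs because $v$ is type~2---whereas the paper appeals to the figures; conversely, the paper works directly from an arbitrary optimal \hl\ rather than routing through the numerical formula of Lemma~\ref{lm:npc}, but these are presentational differences only.
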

\begin{proofatend}
The ``if'' part follows from the fact that every \hhl\ is an \hl.
For the ``only if'' part consider an optimal \hl\ $L$ of size at most $k'$.
By Lemma \ref{lm:type} each vertex is either of type 1 or of type 2.
Consider the following order of the vertices of
$G'$. The most important vertex is $s$ followed by all the leaves of $S$.
Then we put the triple $v_1,v_2,v_3$ for all type 2 vertices where for each $v$, $v_1$ is more important than $v_2$ which is more important than $v_3$ and the order of the triples corresponding to different vertices is arbitrary.
Finally put a triple $v_2,v_1,v_3$ for all type 1 vertices where for each $v$, $v_2$ is more important than $v_1$ which is more important than $v_3$ and the order of the triples corresponding to different vertices is arbitrary.
The labels in Figure~\ref{fig:npcpl1} and Figure~\ref{fig:npcpl3} respect this order.
Thus the \hhl\ $\tilde{L}$ corresponding to this order has exactly~3 $\{u,v\}$-crossings for each $\{u,v\}\in E$.
Therefore by Lemma~\ref{lm:edge} $\tilde{L}$ has the same size as $L$.
\end{proofatend}

Lemma~\ref{lm:npc-hhl} and Lemma~\ref{lm:npc} immediately imply the following
\begin{theorem} \label{tm:npc-hhl}
The problem of deciding whether an undirected graph has an \hhl\ of size at most $k$ is NP-complete.
\end{theorem}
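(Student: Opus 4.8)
The plan is to obtain the result immediately by composing the reduction of Lemma~\ref{lm:npc} with the \hl/\hhl\ equivalence of Lemma~\ref{lm:npc-hhl}, after separately arguing membership in NP.

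First I would check membership in NP. A certificate is a labeling $L$ (together, if desired, with a witnessing order $\pi$). In polynomial time one verifies that (i) $L$ is hierarchical, either by confirming that the supplied $\pi$ satisfies $\pi(u)\le\pi(v)$ whenever $u\in L(v)$, or by checking that the dependency relation ``$u\in L(v)$, $u\ne v$'' induces an acyclic digraph; (ii) $L$ satisfies the cover property, which, after a polynomial-time all-pairs shortest-path computation, amounts to checking for each pair $[s,t]$ that $L(s)\cap L(t)$ contains a vertex on a shortest $s$--$t$ path; and (iii) $|L|=\sum_V|L(v)|\le k$. Hence \hhl\ is in NP.

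For NP-hardness I would reduce from Vertex Cover. Given an instance $(G,k)$, construct $G'$ by the procedure preceding Lemma~\ref{lm:npc} and set $k'=12|V|+1+3|E|+k$; this is polynomial in the size of $G$. By Lemma~\ref{lm:npc}, $G$ has a vertex cover of size $k$ iff $G'$ has an \hl\ of size $k'$, and by Lemma~\ref{lm:npc-hhl} the latter holds iff $G'$ has an \hhl\ of size $k'$. Chaining the two equivalences yields a polynomial-time reduction from Vertex Cover to the \hhl\ decision problem, and together with membership in NP this proves NP-completeness. The only point worth keeping in mind is that Lemma~\ref{lm:npc-hhl} preserves the threshold $k'$ verbatim, so a single bound serves both the \hl\ and the \hhl\ decision problems; given the two lemmas, no genuine obstacle remains.
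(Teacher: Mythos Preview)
Your proof is correct and follows exactly the paper's approach: the paper also derives the theorem immediately by combining Lemma~\ref{lm:npc} with Lemma~\ref{lm:npc-hhl}. Your explicit verification of NP membership is a welcome detail that the paper leaves implicit.
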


Theorem~\ref{tm:npc} and Theorem~\ref{tm:npc-hhl} show that both \hl\ and \hhl\ are NP-Complete in undirected graphs with unit lengths.
If we change length of edges $\{s,v_1\}$ for $v\in V$ from $1$ to $0.9$ our proof is not affected. However, the shortest paths in $G'$ become unique.
So \hl\ and \hhl\ are NP-Complete in undirected graphs even when shortest paths are unique.

\subsection{Directed Graphs}

Here we show that both optimal \hl\ and \hhl\ are NP-hard in directed graphs.
We begin with \hhl, for which there is a simple reduction from the undirected case.

Let $G$ be an undirected graph. We transform $G$ to directed graph $G'$ by replacing each edge $\{u,v\}$ with two arcs $(u,v)$ and $(v,u)$. Now we present the reduction.
\begin{lemma}
The graph $G$ has an \hhl\ of size $k$ if and only if $G'$  has an \hhl\ of size $2k$.
\end{lemma}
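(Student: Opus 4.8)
The plan is to establish a bijection between hierarchical labelings of the undirected graph $G$ and hierarchical labelings of its directed counterpart $G'$ that preserves the total label size up to a factor of $2$. The key observation is that $G'$ is symmetric: replacing each undirected edge $\{u,v\}$ with the two arcs $(u,v)$ and $(v,u)$ means that $\dist_{G'}(u,v) = \dist_{G'}(v,u) = \dist_G(u,v)$, and the set of vertices lying on a shortest $u$--$v$ path in $G'$ coincides with $P_{u,v}$ in $G$. Because any order $\pi$ on the vertices is shared between $G$ and $G'$, I would work with canonical \hhl\ labelings throughout and invoke the theorem on canonical \hhl\ (the minimum \hhl\ respecting a given order) to reduce both sides to a clean combinatorial description in terms of most-important vertices on shortest paths.

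First I would prove the ``only if'' direction. Take an \hhl\ of $G$ of size $k$ respecting some order $\pi$; by the canonical-\hhl\ theorem we may assume it is the canonical labeling for $\pi$, so $u \in L(v)$ iff $u$ is the most important vertex in $P_{v,u}$. Now define the canonical directed \hhl\ on $G'$ for the same order $\pi$: here $u \in L_f(v)$ iff $u$ is most important in $P_{v,u}^{G'} = P_{v,u}$, and $u \in L_b(v)$ iff $u$ is most important in $P_{u,v}^{G'} = P_{u,v}$. By symmetry of $G'$, both $P_{v,u}$ and $P_{u,v}$ equal the undirected $P_{v,u}$, so $u \in L_f(v)$ iff $u \in L(v)$ in $G$, and likewise $u \in L_b(v)$ iff $u \in L(v)$. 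Hence $|L_f(v)| = |L_b(v)| = |L(v)|$ for every $v$, giving a directed \hhl\ of total size $\sum_v (|L_f(v)| + |L_b(v)|) = 2\sum_v |L(v)| = 2k$.

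For the ``if'' direction I would run the same argument backwards. Given a directed \hhl\ of $G'$ of size $2k$, pass to the canonical directed \hhl\ for its order $\pi$ (no larger by the canonical-\hhl\ theorem). Again by symmetry $L_f(v) = L_b(v)$ as vertex sets for every $v$, so $|L_f(v)| = |L_b(v)|$ and the forward labels alone define an undirected canonical \hhl\ for $\pi$ on $G$ whose size is $\frac{1}{2}(2k) = k$; its cover property on $G$ follows because covering the ordered pair $[v,u]$ via $L_f$ is exactly covering the unordered pair in $G$.

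The only subtle point, and the step I would watch most carefully, is justifying the passage to canonical labelings on both sides so that the size is controlled exactly rather than merely up to inequality; this is where the symmetry $P_{u,v} = P_{v,u}$ in $G'$ does the real work, forcing $L_f = L_b$ and pinning the factor at precisely $2$. Everything else is a direct consequence of the canonical-\hhl\ characterization applied to the two graphs under a common vertex order.
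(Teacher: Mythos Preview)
Your proposal is correct and follows essentially the same approach as the paper: exploit the symmetry $P^{G'}_{u,v}=P^{G'}_{v,u}=P^G_{u,v}$ and the canonical-\hhl\ characterization to force $L_f(v)=L_b(v)$, giving the factor-$2$ correspondence. The only difference is cosmetic: for the ``only if'' direction the paper skips the passage to canonical and simply sets $\tilde L_f(v)=\tilde L_b(v):=L(v)$, which already yields a hierarchical labeling of $G'$ of size exactly $2k$; your route through canonical labelings is slightly more uniform but introduces the same harmless size-inequality issue (canonical may be strictly smaller) that the paper itself incurs in the ``if'' direction.
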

\begin{proofatend}
To show the ``only if'' part, we take the labeling $\tilde L$ constructed from $L$ as follows $\tilde L_f(v) := L(v)$ and $\tilde L_b(v) := L(v)$.

Now we show the ``if'' part.
We can assume that $\tilde L_f, \tilde L_b$ is a canonical labeling (or replace the labeling
by a smaller canonical one).
Since $G'$ is symmetric, from the definition of canonical labeling it follows that for any vertex $v$ the forward label has exactly the same hubs as the backward label.
Moreover the distances from $v$ to and from the hubs are the same. So $L$ defined as $L(v) := \tilde L_f(v)$ is a valid labeling for $G$.
\end{proofatend}
\begin{theorem}
The problem of deciding whether a directed graph has an \hhl\ of size at most $k$ is NP-complete.
\end{theorem}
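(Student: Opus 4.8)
The plan is to derive this theorem directly from the preceding lemma together with Theorem~\ref{tm:npc-hhl}, which already establishes NP-completeness of \hhl\ on undirected graphs; the only additional ingredient is membership in NP, which the lemma does not by itself address.

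First I would argue membership in NP. A natural certificate is a pair consisting of a candidate labeling $(L_f, L_b)$ and a ranking $\pi: V \to \{1, \ldots, |V|\}$. Given such a certificate one checks in polynomial time that (i) the total size $\sum_v (|L_f(v)| + |L_b(v)|)$ is at most $k$; (ii) the labeling is hierarchical with respect to $\pi$, i.e.\ every hub $u \in L_f(v) \cup L_b(v)$ satisfies $\pi(u) \le \pi(v)$; and (iii) the cover property holds, which after computing all-pairs distances amounts to verifying, for each ordered pair $[s,t]$ with $t$ reachable from $s$, that some $w \in L_f(s) \cap L_b(t)$ satisfies $\dist(s,w) + \dist(w,t) = \dist(s,t)$. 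Each test is polynomial, so the problem lies in NP. If one prefers a certificate consisting of the labeling alone, the existence of a witnessing $\pi$ can instead be tested by checking that the constraint digraph with an arc $v \to u$ for each $u \in (L_f(v) \cup L_b(v)) \setminus \{v\}$ is acyclic and then taking any topological order.

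For hardness I would use the map $(G,k) \mapsto (G', 2k)$, where $G'$ is the bidirected digraph obtained from the undirected graph $G$ by replacing each edge $\{u,v\}$ with the two arcs $(u,v)$ and $(v,u)$. This construction is computable in polynomial time, and the preceding lemma shows that the minimum \hhl\ size of $G'$ is exactly twice that of $G$; hence $G$ admits an \hhl\ of size at most $k$ if and only if $G'$ admits one of size at most $2k$. Thus the map is a valid polynomial-time many-one reduction from undirected \hhl, which is NP-hard by Theorem~\ref{tm:npc-hhl}, to directed \hhl. Combining this with membership in NP yields NP-completeness.

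There is no deep obstacle here, since the lemma carries all the combinatorial content of the reduction; the only point that requires a little care is the NP-membership argument. Specifically, one must ensure that the hierarchical constraint is efficiently verifiable --- either by including the ranking $\pi$ in the certificate, or by reducing the existence of a valid ranking to an acyclicity test on the induced ordering constraints --- rather than silently assuming that a ranking is available.
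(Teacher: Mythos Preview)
Your proposal is correct and follows essentially the same approach as the paper: the theorem is stated immediately after the lemma with no further proof, so it is meant to follow directly from the lemma together with Theorem~\ref{tm:npc-hhl}. Your explicit NP-membership argument is a welcome addition that the paper leaves implicit.
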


The following remark implies that the above reduction doesn't work for \hl.
\begin{remark}
For a directed graph a minimum \hl\ need not be symmetric.
\end{remark}
\begin{proof}
Consider the 4-cycle graph $C_4=(V,E)$, $V=\{v_0,v_1,v_2,v_3\}$, $E=\{\{v_i, v_{i+1 \mod 4}\}\mid 0\le i\le 3\}$
and the corresponding directed graph~$C_4'$.
An \hl\ $L$ of size 16 for $C_4'$ is shown in Figure~\ref{fig:cycle} (for example $L_f(v_0)$ contains $v_0$ and $v_3$ and $L_b(v_0)$ contains $v_0$ and $v_1$). Note that it is not symmetric as for example $L_f(v_0)\ne L_b(v_0)$.
Any labeling in $C_4'$ satisfying $L_f = L_b$ correspond to a labeling in $C_4$ of half the size.
So in order to show that there is no symmetric labeling of $C_4'$ of size 16 we show that there is no labeling of $C_4$ of size at most 8.
Indeed we need 4 hubs to cover the pairs $[v_i,v_i]$ and 4 hubs to cover the pairs $[v_i,v_{i+1 \mod 4}]$.
This already counts for 8 hubs. Therefore no $v_i$ is in $L(v_{i\pm 2 \mod 4})$.
To cover the $[v_0,v_2]$ pair we need $v_1$ (or $v_3$, the case is similar) to be in both $L(v_0)$ and $L(v_2)$ and therefore $L(v_1)$ contains only $v_1$.
But now we have no common hub for the $[v_1, v_3]$ and therefore it is uncovered. So there is no \hl\ of size $8$ for $C_4$.
\end{proof}

\begin{figure}
\begin{center}
\begin{minipage}[t]{.48\textwidth}
\begin{center}
    \includegraphics[scale=1.1,page=13,height=8em]{figures.pdf}
\end{center}
\caption{Optimal \hl\ for directed graph $C_4'$. Solid and dashed arcs represent forward and backwards labels
respectively.
}\label{fig:cycle}
\end{minipage}
\quad
\begin{minipage}[t]{.48\textwidth}
\begin{center}
    \includegraphics[scale=1.1,page=12,height=8em]{figures.pdf}
\end{center}
\caption{Gadget corresponding to an edge $\{u,v\}$ in the reduction from VC to \hl\ in a directed graph (curly arcs represent labels).}\label{fig:hl}
\end{minipage}
\end{center}
\end{figure}

Now we present another reduction from VC to \hl\ in a directed graph.
For a VC instance $G=(V,E)$ we construct an HL instance $G'=(V',A')$,
$V'=\{w\} \cup \{v_1, v_2 \mid v \in V\} \cup \{e \mid e\in E\}$, $A'=\{(w,v_1), (v_1,v_2) \mid v \in V\}\cup \{ (u_1,v_2), (v_1,u_2), (u_2,e), (v_2, e) \mid e=\{u,v\}\in E\}$.
All arcs have length 1.
For each edge $e=\{u,v\}$ from $G$ we have a gadget as shown in Figure~\ref{fig:hl} (consider only straight arcs).

For any labeling we have $x$ in both $L_f(x)$ and $L_b(x)$ for all vertices $x$ and either $x\in L_b(y)$ or $y \in L_f(x)$ for all arcs $(x,y)$. Let us call such hubs \emph{mandatory} and all other hubs \emph{non-mandatory}.
Mandatory hubs cover all pairs  $[x,y]$ such that $\dist(x,y)\le 1$.
Any labeling for $G'$ has at least $M(G') = 2|V'| + |A'|$ mandatory hubs.

\begin{lemma}\label{lm:dir}
The graph $G$ has a VC of size $k$ if and only if $G'$  has an \hl\ of size $M(G') + k$.
\end{lemma}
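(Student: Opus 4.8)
The plan is to prove the two implications separately, reasoning throughout about the \emph{non-mandatory} hubs only: every labeling already pays for the $M(G')$ mandatory hubs, and these cover exactly the pairs at distance at most $1$. First I would determine which pairs remain uncovered when only mandatory hubs are present. A short reachability analysis shows that the only pairs at distance $\ge 2$ are $[w,x_2]$ and $[x_1,f]$ (both at distance $2$, where $f$ is an edge containing $x$ or a neighbor of $x$), together with the pairs $[w,e]$ at distance $3$. The key preliminary observation is that the distance-$2$ pairs can be covered \emph{without} any non-mandatory hub, simply by orienting each arc's mandatory hub appropriately: take $x_1\in L_f(w)$ and $x_1\in L_b(x_2)$ for the ``spine'' arcs, and $y_2\in L_f(x_1)$, $y_2\in L_b(f)$ for the crossing and edge arcs. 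Under these choices $x_1$ covers $[w,x_2]$, while the single hub $y_2\in L_f(x_1)$ covers $[x_1,f]$ for \emph{all} edges $f\ni y$ at once; I would check that these requirements never ask one arc to supply two different mandatory hubs, so the orientation is globally consistent. Thus the entire cost of the reduction is concentrated in the distance-$3$ pairs $[w,e]$.

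For the forward direction, given a vertex cover $T$ of $G$ with $|T|\le k$, I would keep the mandatory orientation above and add, for every $v\in T$, the single non-mandatory hub $v_2\in L_f(w)$. Since $v_2\in L_b(e)$ is mandatory for every edge $e\ni v$ and $v_2$ lies on a shortest $w$--$e$ path, this one hub covers $[w,e]$ for all edges incident to $v$; because $T$ is a cover, every pair $[w,e]$ is covered. This produces a valid \hl\ with exactly $|T|\le k$ non-mandatory hubs, i.e.\ of size at most $M(G')+k$.

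For the converse I would first argue that each $[w,e]$ genuinely forces a non-mandatory hub: the vertices of a shortest $w$--$e$ path are $w,u_1,v_1,u_2,v_2,e$, and for each of them at least one of the two memberships needed to cover $[w,e]$ (membership in $L_f(w)$ or in $L_b(e)$) is never mandatory. Given an \hl\ with at most $k$ non-mandatory hubs, for each edge $e=\{u,v\}$ I would fix a covering hub $z$ and set $\phi(e)=u$ if $z\in\{u_1,u_2\}$, $\phi(e)=v$ if $z\in\{v_1,v_2\}$, and $\phi(e)=u$ (say) if $z\in\{w,e\}$; then $T=\{\phi(e):e\in E\}$ is a vertex cover. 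The crucial step is to charge each $v\in T$ to a \emph{distinct} non-mandatory hub: through a witnessing edge $e_v$ with $\phi(e_v)=v$, charge $v$ to $v_2\in L_f(w)$, $v_1\in L_b(e_v)$, $w\in L_b(e_v)$, or $e_v\in L_f(w)$ according to the type of the covering hub. Distinctness holds because hubs of the first two types carry the identity of $v$, hubs of the last two types carry the identity of $e_v$, and distinct cover-vertices have distinct witnessing edges. This injection gives $|T|\le k$.

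The main obstacle I expect is the bookkeeping in the converse: a single hub such as $v_2\in L_f(w)$ can cover $[w,e]$ for many edges at once, so I must charge per cover-\emph{vertex} (through one witnessing edge) rather than per edge, and verify that the four hub types above are pairwise distinct across different $v$ — in particular that two cover-vertices can never be forced to share the edge-indexed hubs $w\in L_b(e)$ or $e\in L_f(w)$. A secondary check is the global consistency of the mandatory orientation in the forward direction when a vertex has high degree, i.e.\ confirming that no arc is asked to supply two different mandatory hubs.
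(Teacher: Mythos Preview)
Your proposal is correct and follows essentially the same route as the paper: both identify that a suitable orientation of the mandatory hubs covers every pair except the distance-$3$ pairs $[w,e]$, both realize the forward direction by adding $v_2\in L_f(w)$ for each cover vertex $v$, and both extract a vertex cover from the extra hubs in the converse. Your converse is in fact more carefully argued than the paper's---the paper simply says ``for each non-mandatory hub add an endpoint of the relevant edge to the cover'' without spelling out the injectivity, whereas your per-vertex charging (via a witnessing edge $e_v$ and the four hub types $v_2\in L_f(w)$, $v_1\in L_b(e_v)$, $w\in L_b(e_v)$, $e_v\in L_f(w)$) makes explicit why the cover has size at most the number of non-mandatory hubs.
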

\begin{proof}
We claim that mandatory hubs are enough to cover all pairs in $G'$ except $[w,e]$ for $e\in E$, which means all pairs $[x,y]$ with $\dist(x,y) \le 2$.
The sufficient labeling is shown in Figure~\ref{fig:hl} by curly arcs: a solid curly arc $(x,y)$ means $y\in L_f(x)$ and a dashed curly arc $(x,y)$ means $y\in L_b(x)$.
Indeed, for a pair $[x,y]$ with $\dist(x,y)= 2$ we have either $x=w$ or $y=e$ for some $e\in E$. In the former case $y=u_2$ for some $u\in V$ and the common hub is $u_1$. In the latter case $x=u_1$ for some $u\in V$ and either $e = \{u,v\}$ or $e = \{v,v'\}$ for some neighbor $v\in V$ of $u$. In both cases $v_2$ is the common hub.

Since $\dist(w,e)=3$ for a $e\in E$ we need a non-mandatory hub to cover a $[w,e]$ pair.
The non-mandatory hubs correspond to the vertex cover in $G$.
If there is a VC of size $k$ in $G$ then it is sufficient to use exactly $k$ non-mandatory hubs: add $v_2$ to $L_f(w)$ for every $v$ in VC.

Suppose there is an \hl\ with at most $k$ non-mandatory hubs.
We build a VC of size at most~$k$.
For a non-mandatory hub $e \in L_f(w)$ and any non-mandatory hub in $L_b(e)$ for an edge $e=\{u,v\}\in E$, add $u$ to the VC.
For a non-mandatory hub $v_2 \in L_f(w)$ for some $v\in V$, add $v$ to the VC.
It is easy to see that this is indeed the vertex cover.
\end{proof}
\begin{theorem}
The problem of deciding whether a directed graph has an \hl\ of size at most $k$ is NP-complete.
\end{theorem}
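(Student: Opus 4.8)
The plan is to prove both membership in NP and NP-hardness separately, since the hard direction of the reduction has already been packaged into Lemma~\ref{lm:dir}.

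For membership in NP, I would take the labeling $L = (L_f, L_b)$ itself as the certificate. The verifier first checks that $|L| \le k$, which is immediate from the definition of label size. It then checks the cover property. To do this in polynomial time, I would precompute all pairwise distances $\dist(x,y)$ in $G'$ (possible in polynomial time since the lengths are non-negative and there are no zero-length cycles), and then, for every ordered pair $[s,t]$ with $t$ reachable from $s$, scan $L_f(s) \cap L_b(t)$ for a hub $v$ with $\dist(s,v) + \dist(v,t) = \dist(s,t)$; such a $v$ lies on a shortest $s$--$t$ path and hence covers $[s,t]$. I would additionally confirm that every distance value recorded in a label equals the corresponding true distance. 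All of these checks run in time polynomial in $|G'|$ and $k$, so \hl\ is in NP.

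For hardness, I would invoke Lemma~\ref{lm:dir}. Since every labeling of $G'$ must contain at least the $M(G') = 2|V'| + |A'|$ mandatory hubs, the lemma yields that $G'$ has an \hl\ of size at most $M(G') + k$ if and only if $G$ has a vertex cover of size at most $k$. As $G'$ has $O(|V| + |E|)$ vertices and arcs and $M(G')$ is computable in polynomial time, the map $(G,k) \mapsto (G', M(G') + k)$ is a polynomial-time many-one reduction from Vertex Cover, a classical NP-complete problem, to directed \hl. Combining the two parts gives NP-completeness.

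The argument is essentially routine once Lemma~\ref{lm:dir} is in hand; the only point requiring a moment of care is verifying the cover property within the polynomial-time bound, which is handled by the all-pairs shortest-path precomputation described above, together with the observation that the monotonicity of label size in the number of mandatory hubs lets us pass freely between the exact-size statement of the lemma and the ``at most $k$'' formulation of the decision problem.
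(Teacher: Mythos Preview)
Your proposal is correct and follows the paper's approach: the paper states the theorem as an immediate corollary of Lemma~\ref{lm:dir} with no separate proof, and you have simply spelled out the routine details (NP membership via the labeling-as-certificate check, hardness via the polynomial reduction from Vertex Cover encoded in the lemma). Your added remark about passing from the exact-size formulation of the lemma to the ``at most $k$'' decision version is a useful clarification that the paper leaves implicit.
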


\section{HL vs.\ HHL}

\begin{figure}
\begin{center}
\begin{center}
    \includegraphics[scale=1.1,page=3,height=8em]{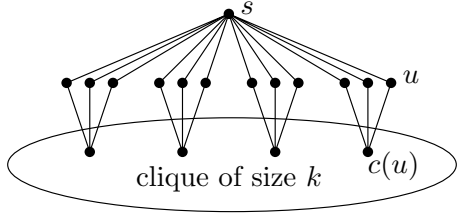}
\end{center}
\caption{Example which separates \hl\ and \hhl.}\label{fig:vs}
\end{center}
\end{figure}

In~\cite{GRS-13}, it is shown that the gap between the size of the  optimal
\hhl\ and the size of the  optimal \hl\ can be $\Omega(n^{0.26})$.
We show that for a graph in  Figure~\ref{fig:vs} the gap is  $\Omega(n^{0.5})$.

\Xcomment{
\begin{figure}
\begin{center}
    \includegraphics[scale=1,page=3]{figures.pdf}
\end{center}
\caption{Example which separates \hl\ and \hhl.}\label{fig:vs}
\end{figure}
}

\begin{theorem}\label{thm:hlvshhl}
There is a graph family for which the optimal \hhl\ size is $\Omega(\sqrt n)$ times larger than the optimal \hl\ size.
\end{theorem}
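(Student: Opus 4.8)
The plan is to exhibit an explicit graph family and analyze both the optimal \hl\ and the optimal \hhl\ on it, showing a $\sqrt n$ separation. The natural candidate (consistent with Figure~\ref{fig:vs} and the structure of the bad examples earlier in the paper) is a graph whose shortest-path structure forces any \emph{hierarchical} ordering to pay a large price on some vertices, while a non-hierarchical labeling can balance the load. Concretely, I would take a bipartite-like construction on $\Theta(k^2)$ vertices (so $n=\Theta(k^2)$ and $\sqrt n=\Theta(k)$) built from two groups of $\Theta(k)$ ``center'' vertices together with $\Theta(k^2)$ peripheral vertices, arranged so that every center vertex lies on shortest paths between many peripheral pairs, but crucially the two groups of centers are symmetric: no center is more important than another for covering purposes, yet any single hierarchical order must rank one group above the other.

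First I would describe the graph precisely (vertex set, edges, lengths) and compute all pairwise distances and the sets $P_{u,v}$ of shortest-path vertices; this fixes exactly which hubs can cover which pairs. Second, I would construct a good (non-hierarchical) \hl\ of size $O(n)$: the idea is to split the covering responsibility asymmetrically between the two center groups — for instance letting group-$A$ centers appear in forward labels and group-$B$ centers in backward labels (or, in the undirected case, distributing which center covers which pair so that no single vertex accumulates $\Omega(k)$ hubs). Because a general \hl\ need not respect any global order, it can use each center as a hub for only the pairs where it is ``cheap,'' keeping every label size $O(1)$ on average and the total $O(n)$. Third, and this is the crux, I would lower-bound the optimal \hhl. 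By the canonical-\hhl\ theorem, any \hhl\ is determined by an order $\pi$, and $u\in L(v)$ exactly when $u$ is the most important vertex on a shortest $v$-$u$ path. I would argue that for \emph{any} order $\pi$, whichever center group is ranked lower is forced to carry the most-important-vertex role for $\Omega(k^2)$ pairs concentrated on $\Omega(k)$ of its vertices, each of which then collects $\Omega(k)$ hubs — yielding total \hhl\ size $\Omega(k^3)=\Omega(n^{3/2})=\Omega(\sqrt n\cdot n)$, a $\sqrt n$ factor above the optimal \hl.

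The main obstacle is the \hhl\ lower bound: I must show the $\Omega(\sqrt n)$ blow-up holds for \emph{every} ordering, not just the naive ones, so the argument cannot fix $\pi$ in advance. The clean way is a counting/averaging argument: identify a set of $\Omega(n)$ ``hard'' pairs each of whose shortest-path interiors is a specific small set of candidate hubs, then show via the canonical characterization that for any $\pi$ the most-important covering vertices for these pairs are distributed so that their labels sum to $\Omega(n^{3/2})$. A symmetry between the two center groups (so that ranking either one low is equally bad) is what makes the bound order-independent. Once the hard-pair set and its covering candidates are pinned down, the remaining estimates are routine; the delicate point is choosing the construction so that this symmetry is genuine and the $O(n)$ upper bound for general \hl\ truly exploits the forward/backward (or pair-splitting) freedom that \hhl\ lacks.
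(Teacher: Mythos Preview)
Your lower-bound accounting does not reach $\Omega(n^{3/2})$. You say the $\Omega(k)$ vertices of the lower-ranked center group each collect $\Omega(k)$ hubs, ``yielding total \hhl\ size $\Omega(k^3)$''; but $k$ vertices times $k$ hubs is only $\Omega(k^2)=\Omega(n)$. To get $\Omega(k^3)$ the charge must land on $\Theta(k^2)$ vertices, i.e.\ on the peripheral ones, and your two-group sketch does not explain how \emph{every} ordering forces that. The paper uses a \emph{single} group of $k$ star centers joined in a clique, $k-1$ leaves per star, and one extra vertex $s$ adjacent to every leaf ($n=k^2+1$). The lower bound is a leaf--center charge: for a leaf $u$ with center $c(u)$ and any center $v$ with $\pi(v)<\pi(c(u))$, the unique shortest $u$--$v$ path is $(u,c(u),v)$; since $c(u)$ is less important than $v$ it cannot lie in $L(v)$, so the common hub must be $u$ or $v$, contributing one entry to $L(u)\cup L(v)$. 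Summing over all leaves yields $(k-1)\sum_{i=1}^{k-1} i=\Omega(k^3)$. The $\Theta(k^2)$ leaves, not the handful of centers, carry the count.

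Your upper-bound mechanism is also off. The graph is undirected, so forward/backward splitting is unavailable, and the good \hl\ does \emph{not} avoid $\Omega(k)$-size labels as you propose. In the paper's $O(n)$ labeling every center's label contains \emph{all} other centers---size $\Theta(k)$ each---but there are only $k$ centers, so this costs $O(k^2)=O(n)$; leaves get constant-size labels $\{u,c(u),s\}$, and $s$ in every label handles the cross-star leaf pairs. The non-hierarchical freedom being exploited is precisely that the centers all cite each other as hubs, which no total order can accommodate. It is a single clique of mutually referencing centers, not a bipartite split into two symmetric groups, that creates the separation.
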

\begin{proofatend}
Consider the undirected graph shown in Figure~\ref{fig:vs}.
The graph consists of $k$ distinct stars each with $k-1$ leaves.
The centers of the stars are connected such that they form a clique.
Finally, there is an additional vertex $s$ connected to the leaves of all stars. The total number of vertices is $n = k^2+1$.
The length of every edge is $1$.

Consider the following \hl\ for this graph.
The vertex $s$ is in every label.
A center $v$ of a star $S$ is in the
 labels of all of the
vertices of $S$.
Finally, every star-center  has every other star-center in
its label.
It is easy to verify that the cover property holds for this labeling.
Each leaf $u$ of some star $S$ has a label of size $O(1)$.
The label of $s$ is of size $O(1)$. The size of the label of each star-center is
 $k+1$. It follows that the
 total size of this labeling is $O(n)$.

To construct an \hhl, we need to order the centers of the stars.
Fix such an order.
Consider a leaf $u$ of some star with center $c(u)$, and let $i$ be the number of star-centers
which are more important than $c(u)$.
For each star-center $v$ that is more important than $c(u)$,
$(u, c(u), v)$ is the shortest path between $u$ and $v$, so
either $v$ is in $L(u)$ or $u$ is in $L(v)$.
This accounts to $i$ hubs in the labels due to the pair $u,v$.
The total contribution of such hubs to the size of the labeling is
$$
(k-1) \sum_{i=1}^{k-1} i = k (k-1)^2 /2 = \Omega(n^{3/2}) .
$$

If follows that the total size of any hierarchical labeling is $\Omega(n^{3/2})$.
This yields an $\Omega(\sqrt n)$ gap between the optimal \hl\
and the optimal \hhl.
\end{proofatend}

The results of Section~\ref{sec:up} imply that the gap in Theorem~\ref{thm:hlvshhl} is within $O(\log n)$ factor of the best possible.

\section{Concluding Remarks}

Our lower bounds for greedy algorithms show that in contrast with \hl\ the greedy algorithm does not give a poly-log approximation for \hhl.
This motivates the question of whether a poly-log approximation algorithm for \hhl\ exists.
Our lower bound for w-HHL is $\Omega(\sqrt[6]n)$ factor away from the upper bound,
which leaves the open question to determine the polynomial factor for the w-HHL algorithm approximation guarantee. 

On many problem classes g-HHL and w-HHL find labelings of size close
to that found by the $O(\log n)$-approximation algorithm
for \hl~\cite{DGSW-14}.
It would be interesting to get a theoretical explanation of this
phenomena, for example by proving a better approximation ratio for
g-HHL or w-HHL on natural classes of graphs.

{\small
\bibliographystyle{plain}
\bibliography{labels,../bib/all}
}

\Xcomment{
\newpage
\appendix
\part*{\sffamily\bfseries Appendix}
\section{Proofs}
\printproofs

\section{Asymmetric optimal HL example}

\begin{figure}
\begin{center}
    \includegraphics[scale=1.1,page=13,height=8em]{figures.pdf}
\end{center}
\caption{Optimal \hl\ for directed graph $C_4'$. Solid and dashed arcs represent forward and backwards labels
respectively.
}\label{fig:cycle}
\end{figure}

\begin{remark}
For a directed graph a minimum \hl\ need not be symmetric.
\end{remark}
\begin{proof}
Consider the 4-cycle graph $C_4=(V,E)$, $V=\{v_0,v_1,v_2,v_3\}$, $E=\{\{v_i, v_{i+1 \mod 4}\}\mid 0\le i\le 3\}$
and the corresponding directed graph~$C_4'$.
An \hl\ $L$ of size 16 for $C_4'$ is shown in Figure~\ref{fig:cycle} (for example $L_f(v_0)$ contains $v_0$ and $v_3$ and $L_b(v_0)$ contains $v_0$ and $v_1$). Note that it is not symmetric as for example $L_f(v_0)\ne L_b(v_0)$.
Any labeling in $C_4'$ satisfying $L_f = L_b$ correspond to a labeling in $C_4$ of half the size.
So in order to show that there is no symmetric labeling of $C_4'$ of size 16 we show that there is no labeling of $C_4$ of size at most 8.
Indeed we need 4 hubs to cover the pairs $[v_i,v_i]$ and 4 hubs to cover the pairs $[v_i,v_{i+1 \mod 4}]$.
This already counts for 8 hubs. Therefore no $v_i$ is in $L(v_{i\pm 2 \mod 4})$.
To cover the $[v_0,v_2]$ pair we need $v_1$ (or $v_3$, the case is similar) to be in both $L(v_0)$ and $L(v_2)$ and therefore $L(v_1)$ contains only $v_1$.
But now we have no common hub for the $[v_1, v_3]$ and therefore it is uncovered. So there is no \hl\ of size $8$ for $C_4$.
\end{proof}
}

\end{document}

\section{HHL and Highway Dimension}
\andrew{Add missing definitions}

The proof uses results from~\cite{AFGW-10,ADFGW-11,ADGW-11,ADFGW-13},
and we assume that the reader is familiar with these papers.
\begin{theorem}
A network with HD $h$ and diameter $D$ has an HHL of size $O(n h \log D)$,
and in polynomial time one can find an HHL of size $O(n h \log h \log D)$.
\end{theorem}
\begin{proof}
If one takes the hubs of $L_f(v)$ ($L_b(v)$) to be the vertices visited
by the forward (backward) search of the contraction hierarchies (CH)
algorithm~\cite{GSSD-08}, with the corresponding distances from (to) $v$,
the resulting labeling has the cover property~\cite{AFGW-10}.
There is an CH preprocessing algorithm such that for every $v$, the number
of visited vertices is $O(h \log D)$~\cite{ADFGW-13}.
Furthermore, there a polynomial CH preprocessing algorithm with
$O(h \log h \log D)$ visited vertices.
\end{proof}
One can get a direct preprocessing algorithm that does not explicitly use CH.
The basic preprocessing of~\cite{ADFGW-11,ADFGW-13} computes a partitioning
pf the vertices into layers, which is used to compute the labeling.
The partitioning is also used for produce an ordering for CH.
As shown in~\cite{DGPW-14-2}, from an ordering one can, in polynomial time,
compute a ``canonical'' HHL that is contained in any other HHL consistent
with the ordering.
As the HHL produced using CH search spaces is consistent with the ordering,
the canonical HHL, which is at most as big, is the desired HHL.

An interesting open question is whether the greedy (or the weighted greedy)
algorithm achieves a good bound for small HD networks.
We show that a closely related algorithm does.
We need more details of the preprocessing algorithm~\cite{ADFGW-13}.
The algorithm partitioned the vertices into layers $Q_0, \ldots, Q_{\log D}$.
To produce the ordering, one puts vertices of $Q_{\log D}$, then vertices of
$Q_{\log D - 1}$, etc., ordering the vertices of each $Q_i$ arbitrarily.
To compute $Q_i$, the algorithm finds approximately optimal hitting sets
for a certain sets of shortest paths (different for different $i$).
The results of~\cite{ADFGW-13} imply that if one uses greedy algorithm
to find the hitting sets, the total label size is $O(n h \log n \log D)$.
The main difference with the greedy HHL algorithm is that we find greedy
hitting sets for several sets of shortest paths.